\tikzstyle{infinito}=[circle,inner sep=0pt,minimum size=0mm]
\tikzstyle{nodo}=[circle,draw,fill,inner sep=0pt,minimum size=2.5pt]
\newtheorem{theorem}{Theorem}
\newtheorem{lemma}{Lemma}[section]
\newtheorem{proposition}[lemma]{Proposition}
\theoremstyle{definition}
\newtheorem{definition}[lemma]{Definition}
\newcounter{assumptionalpha}
\newtheorem{assumption}[assumptionalpha]{Assumption}
\theoremstyle{remark}
\numberwithin{equation}{section}
\newcommand{\CO}{\mathbb{C}}
\newcommand{\NA}{\mathbb{N}}
\newcommand{\RE}{\mathbb{R}}
\newcommand{\GG}{\mathcal{G}}
\newcommand{\KK}{\mathcal{K}}
\newcommand{\Ver}{\mathcal{V}}
\newcommand{\Ed}{\mathcal{E}}
\newcommand{\ve}{\varepsilon}
\newcommand{\supp}{\operatorname{supp}}
\newcommand{\sech}{\operatorname{sech}}
\newcommand{\VV}{R}
\newcommand{\WW}{S}
\newcommand{\x}{x}
\begin{document}

\title[Existence of the ground state for the NLS with potential on graphs]{Existence of the ground state for the NLS with potential on graphs}

\author{Claudio Cacciapuoti}
\address{DiSAT, Sezione di Matematica, Universit\`a dell'Insubria, via Valleggio 11, I-22100 Como, Italy}
\email{claudio.cacciapuoti@uninsubria.it}
\thanks{The author  acknowledges the support of the FIR 2013 project ``Condensed Matter in Mathematical Physics'', Ministry of University and
 Research of Italian Republic  (code RBFR13WAET)}

\subjclass[2010]{35Q55, 81Q35, 35R02, 49J40.}

\date{}

\begin{abstract}We review and extend several recent results on the existence of  the ground state for the nonlinear Schr\"odinger (NLS) equation on a metric graph. By ground state we mean a minimizer  of  the NLS energy functional constrained to the manifold of fixed $L^2$-norm. In the energy functional we allow for the presence of a potential term, of delta-interactions in the vertices of the graph, and of a power-type focusing nonlinear term. We discuss both subcritical and critical nonlinearity. Under general assumptions on the graph and the potential, we prove that a ground state exists for sufficiently small mass, whenever the constrained infimum of the quadratic part of the energy functional  is strictly negative. 
\end{abstract}

\maketitle

\section{Introduction}
Analysis on metric graphs and networks is a very well established research field,  potentially  with many  physical and technological  applications. From a mathematical point of view the interest in these structures lies in the fact that, despite being  essentially simple  one-dimensional  objects, they still exhibit several intriguing  features due to nontrivial connectivity and topology. 

For an introduction to metric graphs and an extended list of references we refer to one of the many monographs on the subject, see, e.g., \cite{berkolaiko-kuchment13,exner-keating-kuchment-sunada-teplyaev08,mugnolo14,post12}. 

The study  of nonlinear equations on graphs  is still at its beginning yet quickly developing. A  monograph on quasilinear wave equations on one-dimensional networks, mostly dealing with the problem of  the well-posedness,   is \cite{mehmeti94} (see also \cite{bona-cascaval-cam08,mehmeti-vonbelow-nicaise01}).

Concerning the NLS equation on simple networks (e.g., the $Y$-junction or  star-graph, see Fig. \ref{f:2}) a certain amount of work has been recently carried on: for the scattering and transmission properties of simple networks, see, e.g.,   \cite{adami-cacciapuoti-finco-noja-rmp11,cascaval-hunter-lm10,sobirov-matrasulov-sabirov-sawada-nakamura-pre10,sobirov-sabirov-matrasulov-saidov-nakamura-npcn14};  the inverse scattering method has recently  been applied to the cubic NLS on a star-graph in  \cite{caudrelier-cmp15};  the shrinking limit for the dynamics in a  thin network (a relevant problem from the point of view of applications)   has been studied in \cite{kevrekidis-frantzeskakis-theocharis-kevrekidis-pla03,sobirov-babajanov-matrasulov-npcm17,uecker-grieser-sobirov-babajanov-matrasulov-pre15}.  For a review on recent results and open problems related to the NLS equation on graphs  we refer to \cite{noja-ptrsa14}.

In what follows we shall  focus attention on the problem of the existence of the ground state. We shall discuss several related works at the end of the introduction. \\

The problem we are interested in is the minimization of the nonlinear Schr\"odinger energy functional 
\begin{equation}\label{energy}
E[\Psi]:= \| \Psi ' \|^2 +(\Psi,W\Psi) +   \sum_{v\in \Ver} \alpha_v |\Psi(v)|^2 -\frac{1}{\mu+1} \|\Psi \|_{2\mu+2}^{2\mu + 2}  \qquad 0<\mu \leq 2
\end{equation}
defined on a metric graph $\GG$, where $W$ is a potential on the graph, $\Ver$ is the set of vertices of the graph, and $\alpha_v$ are some real constants that take into account  possible delta-interactions in the vertices. We shall focus attention on the minimization problem 
\begin{equation}\label{minimization}
-\nu_\mu(m) := \inf \{ E[\Psi]\,|\;  \Psi\in H^1(\GG), \, \|\Psi\|^2= m \}.
\end{equation}
The parameter $m$ is called \emph{mass}, and will play an important role in our analysis.  It is easy to check that the functional $E[\Psi]$, defined on the space $H^1(\GG)$ (see Sec. \ref{s:prel} for a precise definition of $H^1(\GG)$) is unbounded from below; to this aim it is enough to consider the behavior of $E[\lambda\Psi]$ for large $\lambda$. It is well known that imposing the constraint $\|\Psi\|^2 = m$ may solve this issue; indeed, the first question that we try to answer concerns the existence of a lower bound for the infimum in Eq. \eqref{minimization}. 

Whenever such a lower bound exists, we will be concerned about the existence of a minimizer. We shall use the following definition
\begin{definition}[Ground state]
A minimizer of problem \eqref{minimization}, i.e., a function $\hat \Psi \in H^1(\GG)$, such that $\|\hat \Psi\|^2=m$, and $E[\hat \Psi] = -\nu_\mu(m)$ (if it exists), is  called \emph{ground state} (of mass $m$).
\end{definition}
We recall that a metric graph can be understood as a metric space made up of a set of segments, referred  to as \emph{edges}, and a set of points, called \emph{vertices}. 
%The edges are glued together according to some fixed connection rule. Every vertex is  identified with a gluing point between two or more edges or with the endpoint of a \emph{terminal} edge, see Fig. \ref{f:1}.
 In our analysis,  we shall always assume that the graph has a finite number of edges and vertices.  The edges can be of  finite or infinite length; in the first case each edge  is identified with a segment $[0,\ell_e]$ ($\ell_e$ being the length of the  edge), in the latter  with a copy of the half-line $[0,+\infty)$. The edges of the graph are glued together according to a connection map  which identifies each endpoint ($0$ or $\ell_e$) of each edge with a vertex $v$ of the graph, see Fig. \ref{f:1}.

As for the standard NLS with power-type nonlinearity on the real-line, the case $0<\mu<2$ is called \emph{subcritical} while the case $\mu=2$ is called \emph{critical}.   The terminology is associated to the scaling properties of the \emph{kinetic} and \emph{nonquadratic}  terms in the energy  functional ($\| \Psi ' \|^2$ and $\|\Psi \|^{2\mu+2}_{2\mu + 2} $ respectively). In fact, under the mass invariant transformation $\Psi \in L^2(\GG) \to \Psi_\lambda \in L^2(\lambda^{-1} \GG)$, defined by $\Psi_\lambda(x) := \sqrt\lambda \Psi(\lambda x)$,  the scaling relations are  
\begin{equation}\label{scaling}
\|\Psi_\lambda'\|_{L^2(\lambda^{-1}\GG)}^2 = \lambda^{2}\|\Psi'\|_{L^2(\GG)}^2 \quad \text{and} \quad  \|\Psi_\lambda\|_{L^{2\mu+2}(\lambda^{-1}\GG)}^{2\mu+2} = \lambda^{\mu}\|\Psi\|_{L^{2\mu+2}(\GG)}^{2\mu+2}.
\end{equation} 
When $\GG$ is itself scale invariant, i.e., $\GG$ coincides with the real-line, or the half-line, or $N$ half lines with a common vertex (a \emph{star-graph}), the scaling \eqref{scaling} clearly implies that the kinetic (positive) term dominates for $0<\mu<2$ and $\lambda$ large enough, thus suggesting that the infimum \eqref{minimization} is bounded from below. For  $\mu=2$  the two terms scale in the same way, and it turns out that the infimum $-\nu_\mu(m)$ is bounded from below only for small mass. The same behavior can be observed for generic graphs. 

Our main results are stated in Ths. \ref{t:lowerbound} and \ref{t:main} below.  We remark that the subcritical case was discussed in \cite{cacciapuoti-finco-noja-rxv16}, here we include  the case $\mu=2$. When $\GG$ is a star-graph (and $W=0$), the critical case was already discussed in \cite{adami-cacciapuoti-finco-noja-aihp14}, to extend the analysis  to generic graphs   we shall use several results from \cite{adami-serra-tilli-cmp16}. 

We make the following assumptions. 
\begin{assumption}\label{a:G}$\GG$ is a finite, connected graph, with at  least one external edge.  
\end{assumption}
We call \emph{finite} a graph that has a finite number of edges and vertices; a graph $\GG$ is \emph{connected} if given any two points of the graph there is always a path  in $\GG$ connecting them; and we call \emph{external} an edge of infinite length.
\begin{assumption}\label{a:W}
$W = W_+ - W_-$ with $W_\pm\geq 0$, $W_+\in L^1(\GG) + L^{\infty}(\GG)$, and $W_-\in L^r(\GG)$ for some $r\in [1,1+1/\mu]$.
\end{assumption}
We recall that $L^r(\GG)\subset L^1(\GG) + L^{\infty}(\GG)$ for all $r\in [1,+\infty ]$.  % see Shkoller notes on sobolev spaces 
Hence, under Ass. \ref{a:W},  one has $W\in L^1(\GG)+L^\infty(\GG)$. We remark that we shall use the stronger assumption $W_-\in L^r(\GG)$ only in   Th. \ref{t:main}. More precisely,  the additional constraint $r\in[1,1+1/\mu]$ is needed only   in the bound \eqref{W-1}; in  the remaining part of the proof it is enough to assume  $1\leq r <\infty$.

Denote by $E^{lin}[\Psi]$  the quadratic part of energy functional $E[\Psi]$, 
\begin{equation}\label{Elin}
E^{lin}[\Psi ] :=  \| \Psi ' \|^2 +(\Psi,W\Psi) +   \sum_{v\in \Ver} \alpha_v |\Psi(v)|^2,
\end{equation}
and by $-E_0$ the  infimum 
\begin{equation}\label{E0}
-E_0 : = \inf \left\{ E^{lin} [ \Psi] \,|\;  \Psi\in H^1(\GG),\; \|\Psi\|^2 = 1\right\}.
\end{equation}
We shall assume that this infimum is negative.  
\begin{assumption}\label{a:E0} 
$E_0>0$. 
\end{assumption}
%We note that Ass. \ref{a:E0} is certainly satisfied if $\|W_+\|_1 - \|W_-\|_1 +\sum_{v\in \Ver}\alpha_v<0$, see Prop. \ref{p:E0} below. \\ 
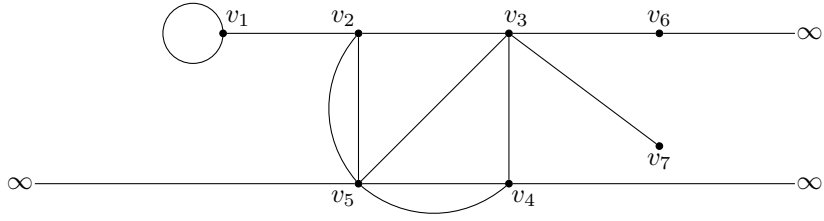
\begin{figure}[t]
\begin{tikzpicture}
\node at (-6.5,0) [infinito](0) {${\infty}$};
\node at (-2,0) [nodo] (5) {};
\node at (0,0) [nodo] (4) {};
\node at (-2,2) [nodo] (2) {};
\node at (0,2) [nodo] (3) {};
\node at (4,0) [infinito] (8) {${\infty}$};
\node at (4,2) [infinito] (7) {${\infty}$};
\node at (2,2) [nodo] (6) {};
\node at (-3.8,2) [nodo] (1) {};
\node at (-2.2,2.2){$v_2$};
\node at (0.1,2.2){$v_3$};
\node at (0.2,-0.2){$v_4$};
\node at (-2.2,-0.2){$v_5$};
\node at (2,2.2){$v_6$};
\node at (-3.6,2.2){$v_1$};
\draw [-,black] (-4.2,2) circle (0.4cm) ;
\draw [-,black] (5) -- (2);
\draw [-,black] (0) -- (5);
\draw [-,black] (5) -- (4);
\draw [-,black] (5) to [out=-40,in=-140] (4);
\draw [-,black] (5) to [out=130,in=-130] (2);
\draw [-,black] (5) -- (3);
\draw [-,black] (1) -- (2);
\draw [-,black] (2) -- (3);
\draw [-,black] (3) -- (6);
\draw [-,black] (4) -- (3);
\draw [-,black] (6) -- (7);
\draw [-,black] (4) -- (8);
\node at (2,0.5) [nodo] (10) {};
\node at (2,0.3){$v_7$};
\draw [-,black] (3) -- (10);
\end{tikzpicture}
\caption{\label{f:1}A graph with  $7$ vertices and  $14$ edges: $3$ external, and $11$ internal (of which $1$ terminal).}
\end{figure}
In the statements of our main results, some constraints on $\mu$ and $m$,  with an interplay between the two parameters,  are needed. 
Two quantities will enter the constraints. For the critical case it will be relevant the best constant $K_{6,2}(\GG)$ satisfying the Gagliardo-Nirenberg inequality \eqref{gn1}. In the subcritical case, instead, the constraint will depend on a constant, denoted by  $\gamma_\mu$, which is related to the infimum of the ``free'' nonlinear NLS energy functional on the real-line 
\[ -t_\mu(m) : = \inf \left\{E_\RE[\psi] \,\Big|\; \psi\in H^1(\RE), \, \| \psi\|_{L^2(\RE)}^2 =m  \right\} \qquad 0<\mu\leq 2,\]
with 
\[E_\RE[\psi]:= \|\psi'\|^{2}_{L^2(\RE)}-\frac{1}{\mu+1} \|\psi\|^{2\mu+2}_{L^{2\mu+2}(\RE)}. \]
It turns out that $t_\mu(m)$ can be explicitly computed, and in particular 
\begin{equation}\label{thres-inf}
 t_\mu(m) = \gamma_\mu m^{1+\frac{2\mu}{2-\mu}} \qquad\text{for}\quad  0<\mu<2, 
\end{equation}
where $\gamma_\mu$ is positive  (its explicit expression is  given in Eq. \eqref{gammamu}).
 
Our first result concerns the existence of a lower bound for the infimum in Eq. \eqref{minimization}. 
\begin{theorem}\label{t:lowerbound}
Let Assumption \ref{a:G} hold true and assume  $W\in L^1(\GG) + L^{\infty}(\GG)$. If $0<\mu<2$ then $\nu_\mu(m) <+\infty$ for any $m>0$. If $\mu = 2$ then
$\nu_\mu(m)<+\infty$ for any  $0<m<\sqrt3/K_{6,2}^3(\GG)$.
\end{theorem}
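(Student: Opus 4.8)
The plan is to prove the equivalent statement that the energy $E$ is bounded below on the constraint manifold: if $E[\Psi] \ge -C(m)$ for every $\Psi \in H^1(\GG)$ with $\|\Psi\|^2 = m$, then $-\nu_\mu(m) = \inf E > -\infty$, i.e.\ $\nu_\mu(m) < +\infty$. The three terms of $E[\Psi]$ that can be negative are the potential term $(\Psi, W\Psi)$, the vertex term $\sum_{v\in\Ver}\alpha_v|\Psi(v)|^2$, and the focusing term $-\frac{1}{\mu+1}\|\Psi\|_{2\mu+2}^{2\mu+2}$; the strategy is to dominate each of them by the kinetic term $\|\Psi'\|^2$ and the conserved mass $m$, keeping careful track of the power of $\|\Psi'\|$ that each produces.

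First I would dispose of the two linear negative terms. Writing $W = W_+ - W_-$ and using that $W \in L^1(\GG)+L^\infty(\GG)$ forces $W_- = W_-^1 + W_-^\infty$ with $W_-^1 \in L^1$ and $W_-^\infty \in L^\infty$, I estimate
\[
(\Psi, W\Psi) \ge -\int_\GG W_-|\Psi|^2 \ge -\|W_-^1\|_1\,\|\Psi\|_\infty^2 - \|W_-^\infty\|_\infty\, m,
\]
and likewise $\sum_{v\in\Ver}\alpha_v|\Psi(v)|^2 \ge -\big(\sum_{v\in\Ver}|\alpha_v|\big)\|\Psi\|_\infty^2$. The structural input is the Sobolev-type bound valid on a finite graph (Ass.\ \ref{a:G}),
\[
\|\Psi\|_\infty^2 \le c_\GG\big(\|\Psi\|\,\|\Psi'\| + \|\Psi\|^2\big) = c_\GG\big(\sqrt m\,\|\Psi'\| + m\big),
\]
which I would prove edge by edge with the fundamental theorem of calculus (the external edges contribute no additive term, the compact ones a factor $1/\ell_{\min}$). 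The decisive feature is that this is at most \emph{linear} in $\|\Psi'\|$: together the two linear terms are bounded below by $-a\|\Psi'\| - b$ for constants $a,b$ depending on $W$, $\{\alpha_v\}_{v}$ and $m$, but not on $\Psi$.

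Next I would control the focusing term through the Gagliardo--Nirenberg inequality \eqref{gn1}, which holds precisely because $\GG$ has an external edge and which I expect in the scale-correct form $\|\Psi\|_{2\mu+2}^{2\mu+2} \le K_{2\mu+2,\mu}^{2\mu+2}(\GG)\,\|\Psi'\|^{\mu}\|\Psi\|^{\mu+2}$, so that
\[
-\frac{1}{\mu+1}\|\Psi\|_{2\mu+2}^{2\mu+2} \ge -\frac{K_{2\mu+2,\mu}^{2\mu+2}(\GG)}{\mu+1}\,m^{\frac{\mu+2}{2}}\,\|\Psi'\|^{\mu}.
\]
For $0<\mu<2$ this is a strictly sub-quadratic power of $\|\Psi'\|$, so Young's inequality gives $\ge -\tfrac12\|\Psi'\|^2 - C_\mu(m)$ for every $m>0$. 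Collecting everything, $E[\Psi] \ge \tfrac12\|\Psi'\|^2 - a\|\Psi'\| - b'$, a quadratic in $\|\Psi'\|$ with positive leading coefficient, hence bounded below; this yields $\nu_\mu(m)<+\infty$ for all $m>0$ in the subcritical case.

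The critical case $\mu=2$ is where the interplay between $\mu$ and $m$ enters, and is the main obstacle. Now the focusing bound is \emph{exactly} quadratic, $-\frac13\|\Psi\|_6^6 \ge -\frac{K_{6,2}^6(\GG)}{3}m^2\|\Psi'\|^2$, so combining with the linear bound gives
\[
E[\Psi] \ge \Big(1 - \frac{K_{6,2}^6(\GG)}{3}m^2\Big)\|\Psi'\|^2 - a\|\Psi'\| - b.
\]
The hypothesis $m < \sqrt3/K_{6,2}^3(\GG)$ is exactly the condition making the leading coefficient $\delta := 1 - \frac{K_{6,2}^6(\GG)}{3}m^2$ strictly positive; the right-hand side is then again a quadratic in $\|\Psi'\|$ with positive leading term, bounded below by $-a^2/(4\delta) - b > -\infty$. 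The point I would stress is that this threshold depends on $K_{6,2}(\GG)$ alone: it is exactly because the potential and vertex contributions enter only linearly in $\|\Psi'\|$ (via $\|\Psi\|_\infty^2 \lesssim \|\Psi\|\,\|\Psi'\|$) that they cannot spoil the quadratic coefficient, and this is the delicate step to get right.
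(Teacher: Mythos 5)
Your proposal is correct and follows essentially the same route as the paper: bound the potential and vertex terms by a quantity at most linear in $\|\Psi'\|$ via an $L^\infty$ interpolation estimate, bound the focusing term by $\|\Psi'\|^{\mu}m^{(\mu+2)/2}$ via Gagliardo--Nirenberg, and observe that the resulting one-variable expression in $\|\Psi'\|$ is bounded below for all $m$ when $\mu<2$ and, when $\mu=2$, exactly when the quadratic coefficient $1-K_{6,2}^6(\GG)m^2/3$ is positive, i.e.\ $m<\sqrt3/K_{6,2}^3(\GG)$. The only cosmetic differences are your subscript $K_{2\mu+2,\mu}$ (which should read $K_{2\mu+2,2}$) and your use of the inhomogeneous bound $\|\Psi\|_\infty^2\lesssim\|\Psi\|\,\|\Psi'\|+\|\Psi\|^2$ where the paper uses the homogeneous one, neither of which affects the argument.
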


The second result concerns the existence of the ground state.  
\begin{theorem} \label{t:main}
Let Assumptions  \ref{a:G}, \ref{a:W}, and \ref{a:E0} hold true. Then $-\nu_\mu(m)\leq-mE_0$. Moreover, let 
\begin{equation}\label{m_mustar}m_\mu^* := \left\{ \begin{aligned}
& \left(E_0/\gamma_\mu\right)^{\frac1\mu-\frac12} \qquad  && \text{if} \quad0<\mu<2 \\
& \sqrt3/K_{6,2}^3(\GG) && \text{if} \quad \mu =2
\end{aligned}\right.
\end{equation}
Then the ground state $\hat\Psi$ exists for all $0<m<m_\mu^*$. 
\end{theorem}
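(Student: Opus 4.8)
The plan is to treat the two assertions separately: the inequality $-\nu_\mu(m)\le -mE_0$ is elementary, while the existence of a minimizer rests on a concentration-compactness argument whose core is a strict energy inequality. For the first part I would note that the nonquadratic term is nonpositive, so $E[\Psi]\le E^{lin}[\Psi]$ for every $\Psi\in H^1(\GG)$, whence $-\nu_\mu(m)=\inf_{\|\Psi\|^2=m}E[\Psi]\le \inf_{\|\Psi\|^2=m}E^{lin}[\Psi]$; since $E^{lin}$ is homogeneous of degree two, this last infimum equals $m$ times the infimum over the unit sphere, that is $-mE_0$ by \eqref{E0}. The decisive consequence is the \emph{strict} inequality $-\nu_\mu(m)<-t_\mu(m)$ for $0<m<m_\mu^*$. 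In the subcritical case this is exactly the content of the threshold: using \eqref{thres-inf}, the condition $m<m_\mu^*=(E_0/\gamma_\mu)^{1/\mu-1/2}$ is equivalent to $mE_0>\gamma_\mu m^{(2+\mu)/(2-\mu)}=t_\mu(m)$, so that $-\nu_\mu(m)\le -mE_0<-t_\mu(m)$. In the critical case $t_\mu(m)=0$ on the admissible range and the inequality degenerates to $-\nu_\mu(m)\le -mE_0<0$, which holds because $E_0>0$.

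Next I would fix a minimizing sequence $\{\Psi_n\}\subset H^1(\GG)$ with $\|\Psi_n\|^2=m$ and $E[\Psi_n]\to -\nu_\mu(m)$, and show it is bounded in $H^1(\GG)$. Estimating $\|\Psi_n\|_{2\mu+2}^{2\mu+2}$ by the Gagliardo-Nirenberg inequality \eqref{gn1}, i.e. by a constant times $\|\Psi_n'\|^{\mu}m^{(\mu+2)/2}$, and absorbing the potential and vertex terms as lower-order contributions via Assumption \ref{a:W}, the kinetic term dominates when $0<\mu<2$; when $\mu=2$ the kinetic and nonlinear terms are of the same order in $\|\Psi_n'\|^2$, and the restriction $m<m_2^*=\sqrt3/K_{6,2}^3(\GG)$ is precisely what keeps the coefficient of $\|\Psi_n'\|^2$ strictly positive, yielding the bound. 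After passing to a subsequence, $\Psi_n\rightharpoonup\hat\Psi$ weakly in $H^1(\GG)$ and locally uniformly.

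The heart of the matter, and the step I expect to be the main obstacle, is to exclude loss of mass along the external edges, so as to conclude $\|\hat\Psi\|^2=m$. Writing $\Psi_n=\hat\Psi+r_n$ with $r_n\rightharpoonup 0$, a Brezis-Lieb splitting should give $E[\Psi_n]=E[\hat\Psi]+\big(\|r_n'\|^2-\tfrac{1}{\mu+1}\|r_n\|_{2\mu+2}^{2\mu+2}\big)+o(1)$, the potential and vertex terms of $r_n$ vanishing in the limit because $W$ is a compact perturbation (this is where Assumption \ref{a:W} and the constraint $r\in[1,1+1/\mu]$ entering \eqref{W-1} are used) and $r_n\to 0$ pointwise. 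Since in the limit $r_n$ lives on the external half-lines, its free energy is bounded below by the real-line value $-t_\mu(\theta)$ with $\theta:=m-\|\hat\Psi\|^2$, giving
\[
-\nu_\mu(m)\ \ge\ -\nu_\mu(m-\theta)-t_\mu(\theta).
\]
Total escape ($\theta=m$, $\hat\Psi=0$) would force $-\nu_\mu(m)\ge -t_\mu(m)$, contradicting the strict inequality of the first paragraph.

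The genuinely delicate case is partial escape $0<\theta<m$, which I would rule out by upgrading the last bound to the strict subadditivity $-\nu_\mu(m)<-\nu_\mu(m-\theta)-t_\mu(\theta)$. This is the crux: it amounts to showing that the mass $\theta$ is always strictly better off concentrated in the potential well, where $E_0>0$ provides a strictly negative marginal gain, than escaping to infinity as a free soliton; the strict superadditivity of $t_\mu$ on $(0,\infty)$ together with the bound $-\nu_\mu(s)\le -sE_0$ and the threshold $m<m_\mu^*$ are what I expect to drive this estimate, so that Assumption \ref{a:E0} is decisive here. Once $\theta=0$ is forced, weak lower semicontinuity of $E$ (again using the compactness of $W$ and of the vertex terms) yields $E[\hat\Psi]\le -\nu_\mu(m)$ with $\|\hat\Psi\|^2=m$, so $\hat\Psi$ is the desired ground state.
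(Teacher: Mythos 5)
Your opening paragraph and your treatment of total escape reproduce the paper's argument: $E\le E^{lin}$ gives $-\nu_\mu(m)\le -mE_0$, a runaway sequence forces $-\nu_\mu(m)\ge -t_\mu(m)$, and this is incompatible with $-mE_0<-t_\mu(m)$ when $m<m_\mu^*$; the $H^1$-boundedness step is also the same. Where you diverge is the compactness machinery: you replace the concentration function and the trichotomy of Lemma \ref{l:cc} by a weak-limit/Brezis--Lieb decomposition $\Psi_n=\hat\Psi+r_n$ with mass defect $\theta$. That route is viable, but the step you yourself call the crux --- excluding partial escape $0<\theta<m$ via $\nu_\mu(m)>\nu_\mu(m-\theta)+t_\mu(\theta)$ --- is asserted rather than proved, and the ingredients you list (superadditivity of $t_\mu$, the bound $\nu_\mu(s)\ge sE_0$, the threshold) do not suffice: to bound $\nu_\mu(m)-\nu_\mu(m-\theta)$ from below you need an \emph{upper} bound on $\nu_\mu(m-\theta)$, and $\nu_\mu(s)\ge sE_0$ points the wrong way. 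The missing lemma is the monotonicity of $s\mapsto\nu_\mu(s)/s$, obtained by amplitude rescaling: for $s\le m$ one has $E[\sqrt{m/s}\,\Phi]\le (m/s)E[\Phi]$, hence $\nu_\mu(m-\theta)\le\frac{m-\theta}{m}\nu_\mu(m)$ and therefore
\[
\nu_\mu(m)-\nu_\mu(m-\theta)\ \ge\ \tfrac{\theta}{m}\,\nu_\mu(m)\ \ge\ \theta E_0\ >\ t_\mu(\theta)
\qquad (0<\theta<m<m_\mu^*),
\]
with $t_2(\theta)=0$ and $\nu_2(m)>0$ doing the job in the critical case. With that lemma your scheme closes; without it the partial-escape case is a genuine gap. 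A second point needing justification is the claim $\liminf_n\bigl(\|r_n'\|^2-\tfrac{1}{\mu+1}\|r_n\|_{2\mu+2}^{2\mu+2}\bigr)\ge -t_\mu(\theta)$: since $K_{p,2}(\GG)$ may exceed $K_{p,2}(\RE)$, the free infimum over graph functions of mass $\theta$ can lie strictly below $-t_\mu(\theta)$, so you must first use local compactness to push $r_n$ out along the external edges, compare with half-line functions vanishing at the origin (the role of the cutoff $\chi$ in the paper), and sum over edges using superadditivity of $t_\mu$.

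For comparison, the paper sidesteps the subadditivity issue entirely: in the dichotomy case it rescales the two split pieces $\VV_k,\WW_k$ to mass $m$, applies the identity $E[\Psi]=\delta^{-2}E[\delta\Psi]+\frac{\delta^{2\mu}-1}{\mu+1}\|\Psi\|_{2\mu+2}^{2\mu+2}$, and derives a strict contradiction from $\nu_\mu(m)>0$ together with $\liminf_k\|\Psi_{n_k}\|_{2\mu+2}^{2\mu+2}>0$; establishing this last non-vanishing is where Assumption \ref{a:W} with $r\in[1,1+1/\mu]$ and the bound \eqref{W-1}, via \eqref{e:floor}, actually enter. That argument uses no information about $t_\mu$ or about $\nu_\mu(s)/s$, which is why the paper confines the threshold $m_\mu^*$ to the runaway case alone.
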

%We note that also in this case the threshold mass can be explicitly computed, and we have have that  
%\nota{check whether $m_2\leq \pi\sqrt{3}/2$ or $m_2\geq \pi\sqrt{3}/2$ for all graphs}. 
%
%Here, $\gamma_\mu$ is related to the energy of the soliton for the Nonlinear Schr\"odinger Equation  (NLS) on the real line (without potential terms), see, e.g., \cite{???}. It can be explicitly computed and its value is given in Eq. \eqref{gammamu} below. 
%
%Concerning the  critical case,  the value $\pi\sqrt{3}/2$ is the mass of the soliton for the NLS  on the real line in the critical case, see, e.g., \cite{???}. The condition $m_2^*<m_2$ is needed to guarantee the existence of the infimum in Eq. \eqref{minimization}, as a consequence of Th. \ref{t:lowerbound}.

%We also remark that by Ass. \ref{a:E0}, the ground state energy in Th. \ref{t:main} is always  strictly negative. 
We remark that it is possible to show that the upper bound for $-\nu_\mu(m)$ is strict, i.e., $-\nu_\mu(m)<-mE_0$. However, since we do not need this additional information, we will not pursue this goal. 

A main tool to prove Th. \ref{t:main} is the concentration-compactness lemma (Lem. \ref{l:cc} below). We remark that concentration-compactness methods are standard in $\RE^d$ (see, e.g., \cite{cazenave03}). Here we adapt the technique to a setting where there is no translation invariance. 

Once that the existence of a lower bound for the infimum \eqref{minimization} is granted by Th. \ref{t:lowerbound},  Lemma \ref{l:cc} is used to show that a sufficient condition for the existence of a minimizer is 
\begin{equation}\label{sufficient}
-\nu_\mu(m) < -t_\mu(m),
\end{equation}
both in the subcritical and in the critical case. Condition \eqref{sufficient} can be used together with the fact that $t_\mu(m)$ can be explicitly computed and that it must be $-\nu_\mu(m)\leq-E_0 m$. In the subcritical case, the threshold mass $m^*_\mu$ follows by requiring  $-E_0m < -t_\mu(m)$.  In the critical case the situation is slightly different because  $t_2(m)$ has a sharp transition depending on the value of $m$, precisely  
\begin{equation}\label{nuRE}
 -t_2(m) = \left\{ 
 \begin{aligned}
& 0 \qquad & \text{if} \; m \leq \pi \sqrt 3/2 \\ 
 &-\infty & \text{if} \; m>\pi \sqrt 3/ 2   \end{aligned}\right. 
\end{equation}
Note that  $\pi \sqrt 3/ 2 = \sqrt3/K_{6,2}^3(\RE)$, where $K_{6,2}(\RE)$ is the best Gagliardo-Nirenberg constant for the real-line. In this case the value of $m^*_2$ arises from the thresholds in Th. \ref{t:lowerbound} and Eq. \eqref{nuRE}, together with the fact that $K_{6,2}(\GG)\geq K_{6,2}(\RE)$, see \cite{adami-serra-tilli-cmp16}.

We  remark that, in the subcritical case and for $W\in L^r(\GG)$ for some $r\in[1,+\infty)$, one can prove that the infimum  $-\nu_\mu(m)$ cannot exceed $-t_\mu(m)$, see Prop. \ref{p:necessary} below. \\

We conclude the introduction with several remarks and a discussion  on the related literature. 

\subsection*{Conservation laws and well-posedness}The nonlinear energy functional \eqref{energy}, defined on $D(E)=H^1(\GG)$, is the conserved energy associated to the NLS  equation compactly written as 
\begin{equation}\label{nls}i \frac{d}{dt} \Psi(t) = H\Psi(t) - |\Psi(t)|^{2\mu}\Psi(t). \end{equation}
Where  $H$ is the linear self-adjoint operator on $\GG$ associated to the quadratic form $E^{lin}$ with domain $D(E^{lin})=H^1(\GG)$, we refer to Sec. \ref{s:prel}, Eqs. \eqref{domain} - \eqref{action} for its  rigorous definition. Here we just remark that  $H$  encodes both the presence of the potential $W$ and the presence of delta-interactions of strength $\alpha_v$ in the vertices.  By the definition of $E_0$, one has $-E_0 = \inf\sigma(H)$, $\sigma(H)$ being the spectrum of $H$. \\ 
The nonlinear term $|\Psi|^{2\mu}\Psi $ in Eq. \eqref{nls} must be understood componentwise as $(|\Psi|^{2\mu}\Psi)_e = |\psi_e|^{2\mu}\psi_e$ for every edge $e$,  where $\psi_e$ is the component of the wavefunction $\Psi$ on the edge $e$ (see Sec. \ref{s:prel} for the details).  Hence, Eq. \eqref{nls} is understood as a single-particle equation on a one-dimensional ramified structure. On each branch (edge), the dynamics   is  governed by: a dispersive term (the second order spatial derivative in Eq. \eqref{action}); plus  a potential term; plus   delta-type interactions in the vertices (which can be understood as singular potentials);  plus  a (focusing) nonlinear power-type term.  

 We recall that, for any initial datum $\Psi(0)= \Psi_0\in H^1(\GG)$, the energy $E[\Psi(t)]$ and the mass $\|\Psi(t)\|^2$ are conserved along the flow associated to Eq. \eqref{nls}. Moreover: if $0<\mu<2$ then  Eq. \eqref{nls} (in weak form) is globally well-posed in $H^1(\GG)$;  if $\mu=2$ global well-posedness holds true for small enough mass ($\|\Psi_0\|^2<\sqrt3/K_{6,2}^3(\GG)$). We refer to \cite{cacciapuoti-finco-noja-rxv16} for the precise statements and the proofs. 

\subsection*{Stationary states and bifurcations} It is well known that, whenever a ground state exists, it is a stationary solution of Eq. \eqref{nls}, i.e., a solution of the form $\Psi(t) = e^{i\omega t} \Psi(\omega)$, with $\omega\in\RE$. The function $\Psi(\omega)$ satisfies the  stationary equation
\begin{equation}\label{stationary}H\Psi(\omega) - |\Psi(\omega)|^{2\mu}\Psi(\omega)  = -  \omega \Psi(\omega), \end{equation}
where the parameter $\omega\in\RE$ must be chosen in order to satisfy the mass constraint. Solutions of Eq. \eqref{stationary} are called \emph{stationary states}. 

In the subcritical case, for mass small enough (possibly smaller than $m^*_\mu$ in Eq. \eqref{m_mustar}), one can prove that the ground state $\hat\Psi$ is the solution of Eq. \eqref{stationary} bifurcating from the null state along the direction of the eigenvector of $H$ corresponding to the eigenvalue $-E_0$.  The bifurcation occurs for $\omega = E_0$. We refer to \cite{cacciapuoti-finco-noja-rxv16} for the details.

Eq. \eqref{stationary} has an interest in its own as its solutions identify the critical points of the energy functional  \eqref{energy} subject to the mass constraint. 

\subsection*{Star-graph with delta-interaction in the vertex} A first rigorous analysis of Eq. \eqref{stationary} was performed for the case of a star-graph with $N$ edges, when $W=0$, and $\alpha \in\RE$,  in \cite{adami-cacciapuoti-finco-noja-epl12} (see also \cite{adami-cacciapuoti-finco-noja-jde14}). In such a case all the stationary states can be explicitly computed.

In the same setting, the minimization problem \eqref{minimization}, for $0<\mu\leq 2$, was studied in \cite{adami-cacciapuoti-finco-noja-aihp14}. One main result in \cite{adami-cacciapuoti-finco-noja-aihp14} is that, for $\alpha<0$ (attractive interaction in the vertex), by exploiting the explicit form of the stationary states, it is possible to identify the ground state with the unique (up to phase multiplication) symmetric stationary state. Several techniques and ideas, that can also be applied to generic graphs, such as the interplay between the concentration-compactness lemma and condition \eqref{sufficient}, were first used  in \cite{adami-cacciapuoti-finco-noja-aihp14}. We remark that in Th. \ref{t:main} the value of the threshold mass  $m^*_\mu$ is slightly improved with respect to the one  given in \cite{adami-cacciapuoti-finco-noja-aihp14}. 

In general it is not possible to say what happens when $m$ crosses the mass threshold $m^*_\mu$. For a star-graph with $0<\mu<2$, $W=0$, and $\alpha<0$, it was proved in \cite{adami-cacciapuoti-finco-noja-jde16} that for mass large enough the ground state does not exist.
% In particular, the infimum in \eqref{minimization} is finite and coincides with $-\gamma_\mu m^{1+\frac{2\mu}{2-\mu}} $ in Eq. \eqref{thres-inf}, but is not attained by any function in $H^1(\GG)$ with $L^2$-norm equal to $m$. 
Despite that, one can show that when the mass is larger than a certain threshold, the symmetric stationary state is a local minimum of the energy functional \eqref{energy} constrained to the manifold of fixed mass, see \cite{adami-cacciapuoti-finco-noja-jde16}.

\subsection*{Free Laplacian with Kirchhoff conditions in the vertices} The case $W = 0$ and  $\alpha_v = 0$ for all $v\in \Ver$ requires a separate discussion, as in this case  Ass.  \ref{a:E0} is not satisfied. We shall use the following notation $E_{(0,0)}[\Psi] = E_{W=0,\alpha_v=0}[ \Psi] $,  and  $E_{(0,0)}^{lin}[\Psi] = E_{W=0,\alpha_v=0}^{lin} [ \Psi]$.  

We recall that the Hamiltonian $H_{(0,0)}$ associated to the quadratic form $E_{(0,0)}^{lin}$ is still defined as in Eqs. \eqref{domain} and \eqref{action}, with $W=0$ and $\alpha_v=0$. When $\alpha_v=0$, the gluing conditions encoded in the definition of $D(H)$ are usually referred to as \emph{Kirchhoff} (or \emph{standard}) conditions.

As a first remark we note that 
\[\inf \left\{ E_{(0,0)}^{lin} [ \Psi] \,|\;  \Psi\in H^1(\GG),\; \|\Psi\|^2 = 1\right\}=0,\]
hence, Ass.  \ref{a:E0} is not satisfied and Th. \ref{t:main} does not give any information on the existence of the ground state (Th. \ref{t:lowerbound} still hods true though). 

A first result on the minimization problem \eqref{minimization} for $E_{(0,0)}[\Psi] $ was given in \cite{adami-cacciapuoti-finco-noja-jpa12}, where it was shown that on a star-graph and  in the cubic case ($\mu=1$), the minimizer does not exist for any value of the mass. The stationary states were explicitly computed in \cite{adami-cacciapuoti-finco-noja-epl12}, see also \cite{adami-cacciapuoti-finco-noja-jde14}. 

A systematic analysis  of  the minimization problem for  $E_{(0,0)}$  on generic graphs has been performed in \cite{adami-serra-tilli-mtn15,adami-serra-tilli-cv15,adami-serra-tilli-jfa16} for the subcritical case (see  \cite{adami-mmnp16} for the analysis of the cubic case), and in \cite{adami-serra-tilli-cmp16} for the critical case. One main result in the series of works \cite{adami-serra-tilli-mtn15,adami-serra-tilli-cv15,adami-serra-tilli-jfa16} is the identification of a topological condition (called Assumption H) that excludes (apart for very specific examples of graphs) the existence of the ground state for any value of the mass $m$. Assumption H can be stated as: the graph $\GG$ can be covered by cycles (here the $\infty$-points of the external edges, see. Figs. \ref{f:1} and \ref{f:2}, are regarded as a single vertex). One example of graph to which Assumption H applies is the star-graph. 

It is worth noticing that this result is very unstable under perturbations of the energy functional, in the sense that adding any arbitrarily small negative potential may  turn the infimum in \eqref{E0} into  strictly negative. Hence, as a consequence of Th. \ref{t:main}, the ground state would exist for small mass, despite the topological condition. This is exactly what happens for a star-graph with $N$ edges: for arbitrary $\alpha<0$ one has $E_0 = |\alpha|^2/N^2$ see \cite{adami-cacciapuoti-finco-noja-aihp14}; on the other hand, for $\alpha =0$ and $N\geq 3$ no ground exists for any $m>0$. 
%
%In \cite{adami-serra-tilli-jfa16} (see also \cite{adami-serra-tilli-mtn15,adami-serra-tilli-cv15,adami-mmnp16})  the authors also investigate the interplay between the metric properties of the graph and the value of the mass, to find some conditions that guarantee existence/nonexistence of the ground state. 

Concerning the critical case, we remark that in \cite{adami-serra-tilli-cmp16}, among other results, the authors prove that for a large class of graphs (e.g., the ones that  do not satisfy Assumption H, have no terminal edge, and have at least two external edges) the ground state exists if and only if $m\in[\sqrt3/K_{6,2}^3(\GG),\sqrt3/K_{6,2}^3(\RE)]$. In view of  Ths. \ref{t:lowerbound} and \ref{t:main}, the existence of the lower bound for the mass parameter might be surprising, so we briefly comment on it. 

One issue in the minimization of $E_{(0,0)}$ for small mass, is that the infimum might be zero, and never attained because  minimizing sequences are \emph{vanishing} in the sense of Lemma \ref{l:cc}. This is indeed the case when $\GG=\RE$, the mass threshold being $\pi\sqrt3/2$, see Eq. \eqref{nuRE}. In the presence of potential terms or delta-interactions, if Ass. \ref{a:E0} is satisfied then  $-\nu_\mu(m)$ is  strictly negative by Th. \ref{t:main}; hence, vanishing cannot occur and there is no lower bound on $m$. 

On the other hand, for large mass, exactly larger than $\sqrt3/K_{6,2}^3(\GG)$, one has a different issue: the Gagliardo-Nirenberg inequality does not guarantee that the infimum in \eqref{minimization} is lower bounded. In \cite{adami-serra-tilli-cmp16}, the authors show that by the topological assumptions,  $-\nu_\mu(m)$ is indeed   lower bounded (and strictly negative due to the large mass)  thus implying that the ground state exists. 

%Whether topological conditions have the same effect also in the presence of potential terms, allowing for the existence of a ground state for $m\geq \pi \sqrt 3/2$ is unclear. 

\subsection*{Other related works} For several specific examples of graphs Eq. \eqref{stationary} can be explicitly solved. One interesting case is the tadpole-graph, see Fig. \ref{f:2}, for $E = E_{(0,0)}$. The stationary states for the tadpole-graph  have been completely characterized in \cite{cacciapuoti-finco-noja-pre15}. Certain families of solutions can be understood in terms of bifurcation theory from embedded eigenvalues and threshold resonances. Bifurcations and stability properties have been further analyzed in \cite{noja-pelinovsky-shaikhova-nl15}. The existence of the ground state for the tadpole-graph for any $m>0$ has been proved in \cite{adami-serra-tilli-cv15}, see also \cite{adami-serra-tilli-jfa16}. A general approach to the study of the stationary solutions has been recently proposed in \cite{gnutzmann-waltner-pre16_1,gnutzmann-waltner-pre16_2}. The stationary solutions on a compact star-graph, in a setting in which the nonlinear term changes from edge to edge has been studied in \cite{sabirov-sobirov-babajanov-matrasulov-pla13,sobirov-sabirov-matrasulov-rxv11}. 

A similar analysis for stationary states on periodic graphs is in \cite{pelinovsky-schneider-ahp17} (see also \cite{gilg-pelinovsky-schneider-ndea16}). While the ground state for the dumbbell-graph is studied in  \cite{marzuola-pelinovsky-amre16}. 

Existence/nonexistence of the ground state in a slightly  different setting, i.e., when the nonlinearity is supported only on a compact region of the graph, has been investigated in \cite{serra-tentarelli-jde16,serra-tentarelli-na16,tentarelli-jmaa16}. The same model was first proposed in   \cite{gnutzmann-smilansky-derevyanko-pra11}, to study the scattering through  a nonlinear network. \\ 

The paper is structured as follows. In Section \ref{s:prel} we set up the model and recall several preliminary results, included the Concentration-Compactness Lemma \ref{l:cc}. Section \ref{s:3} is devoted to the proof of Th. \ref{t:lowerbound}. Section \ref{s:4} is devoted to the proof of Th. \ref{t:main}. We conclude the paper with a short appendix  (App. \ref{s:appendix}) in which we prove that in the subcritical case,  if the potential $W$ decays at infinity, 
$-t_\mu(m)$ is an upper bound for  the infimum in \eqref{minimization}.
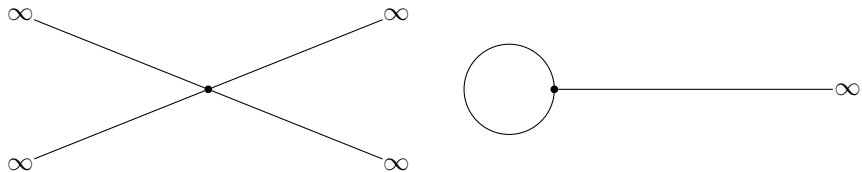
\begin{figure}[t]
\begin{tikzpicture}
\node at (-10.5,0) [infinito](1) {${\infty}$};
\node at (-10.5,2) [infinito](2) {${\infty}$};
\node at (-5.5,2) [infinito](3) {${\infty}$};
\node at (-5.5,0) [infinito](4) {${\infty}$};
\node at (-8,1) [nodo](0) {};
\draw [-,black] (0) -- (1);
\draw [-,black] (0) -- (2);
\draw [-,black] (0) -- (3);
\draw [-,black] (0) -- (4);
\node at (-3.4,1) [nodo](5) {};
\node at (0.5,1) [infinito](6) {${\infty}$};
\draw [-,black] (-4,1) circle (0.6cm) ;
\draw [-,black] (5) -- (6);
\end{tikzpicture}
\caption{\label{f:2}On the right, a star-graph with  $4$ edges. On the left a tadpole graph.}
\end{figure}

\section{Preliminaries}
\label{s:prel}
In this section we recall several basic definitions and facts about  metric graphs, moreover we recall the Gagliardo-Nirenberg inequalities and the concentration-compactness lemma. 

We denote by $\Ver$ the set of vertices of the graph $\GG$, and by $\Ver_-$ the set of vertices for which the coupling constant $\alpha_v$ is strictly negative.  
%We denote by $\GG$ a metric graph. We denote by $\Ver$ the set of vertices of the graph and decompose it as $\Ver = \Ver^{ext} \cup \Ver^{in}$. The set $\Ver^{ext}$ denotes the set of external vertices, these are the vertices to which only one edge is attached (vertices of degree one), otherwise understood as those vertices in which one edge has a dead-end. The set  $\Ver^{in}$ denotes the set of internal vertices of the graph, these are the vertices to which two or more edges  are connected. 
We denote by $\Ed$ the set of edges of the graph and decompose it as $\Ed =\Ed^{in}\cup  \Ed^{ext} $. The set $\Ed^{ext}$ denotes the set of external edges, these are the edges that start from one vertex of the graph and extend to infinity; which means that every element of $\Ed^{ext}$ can be identified with the half-line $ [0,+\infty)$. The set  $\Ed^{in}$ denotes the set of internal edges of the graph, these are the edges of finite length; each edge  $e\in \Ed^{in}$ can be identified with a segment $[0,\ell_e]$, with   $\ell_e$ denoting its  length.

In what follows we shall use the notation $I_e \equiv [0,+\infty)$ for $e\in \Ed^{ext}$, and $I_e \equiv [0,\ell_e]$ for $e\in \Ed^{in}$.
 % A generic point $x \in \GG$ is identified by one edge $e\in \Ed$ and by a coordinate $x_e \in I_e$ (the orientation of the edges is not relevant for our discussion). 
%\\ 
%We make the following assumptions on $\GG$:
%{\color{blue}\begin{assumption}%\label{a:G}
%Denote by $|A|$ the cardinality of a certain set $A$. Then $1\leq |\Ver|<\infty$;  $1\leq |\Ed^{ext}|<\infty $ \nota{$|\Ed^{ext}|\geq 3 ??$}; $0\leq |\Ed^{in}|<\infty $; and  $\GG$ is connected. 
%\end{assumption}
%}
%\nota{Perhaps it is necessary to define the compact core $\GG^{in}$. If so do it here!!}\\

A (complex valued) function on $\GG$ is a map $\Psi : \GG \to \CO^{|\Ed|}$, to be understood as $\Psi =  \bigoplus_{e\in \Ed}\psi_e$ with $\psi_e :I_e \to \CO$, $\psi_e$ denoting the wave function component on the edge $e$. 

For $p\in[1,+\infty]$, one has  $L^p(\GG) = \bigoplus_{e\in \Ed}L^p(I_e)$,  and we denote by $\|\cdot \|_p$ the corresponding norm 
\[\|\Psi\|_p^p = \sum_{e\in \Ed} \|\psi_e\|_{L^p(I_e)}^p\,, \quad  p\in[1,+\infty)\,;\qquad  \|\Psi\|_\infty = \max_{e\in \Ed} \|\psi_e\|_{L^\infty(I_e)}.\]
For $p = 2$ we shall denote the norm in $L^2(\GG)$ simply by $\|\cdot\|$.%, and the corresponding scathe scalar product  between two functions is naturally defined as 
%\[(\Psi, \Phi) = \sum_{e\in \Ed} (\psi_e, \phi_e)_{L^2(I_e)}. \] 

We also recall the definition of  the Sobolev spaces $H^1(\GG)$ and $H^2(\GG)$.  Denote by $C(\GG)$ the set of continuous functions on $\GG$, then 
\begin{equation}\label{H1}
H^1(\GG)  = \left\{\Psi\in C(\GG) \,|\;  \psi_e \in H^1(I_e) \; \forall e \in \Ed\right\},
\end{equation}
equipped with the norm 
\[\| \Psi \|_{H^1(\GG)}^2  = \sum_{e\in \Ed} \| \psi_e \|_{H^1(I_e)}^2;
\]
and 
\[H^2(\GG)  = \left\{\Psi\in H^1(\GG) \,|\; \psi_e \in H^2(I_e) \; \forall e \in \Ed\right\}\]
equipped with the norm 
\[
\| \Psi \|_{H^2(\GG)}^2  =  \sum_{e\in \Ed} \| \psi_e \|_{H^2(I_e)}^2.
\]

We recall the Gagliardo-Nirenberg inequalities on graphs.
\begin{proposition}[Gagliardo-Nirenberg inequalities on graphs]\label{p:GN} For any connected graph $\GG$ such that $|\Ed|<\infty$ and $|\Ver| <\infty$, and for any    $p,q\in[2, +\infty]$, with $p\geq q$, and $\alpha = \frac{2}{2+q}(1-q/p)$, there exist two constants $K_{p,q}(\GG)$ and $\KK_{p,q}(\GG)$ such that 
\begin{equation}\label{gn1}
\| \Psi \|_{p} \leq K_{p,q}(\GG) \| \Psi'\|^\alpha \| \Psi \|^{1-\alpha}_q \qquad \text{if }  |\Ed^{ext}|\geq 1,
\end{equation}
\begin{equation}\label{gn2}
\| \Psi \|_{p} \leq \KK_{p,q}(\GG) \| \Psi\|^\alpha_{H^1} \| \Psi \|^{1-\alpha}_q  \qquad \text{if }  |\Ed^{ext}|= 0,
\end{equation}
for all $\Psi \in H^1(\GG)$. 
\end{proposition}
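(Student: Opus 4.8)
The plan is to reduce both inequalities to a single $L^\infty$ estimate together with an elementary interpolation. For any $2\le q\le p\le\infty$ with $q<\infty$ and any $\Psi\in H^1(\GG)$, writing $|\psi_e|^p=|\psi_e|^{p-q}|\psi_e|^q\le\|\Psi\|_\infty^{p-q}|\psi_e|^q$ on each edge and summing over $e\in\Ed$ gives the pointwise interpolation $\|\Psi\|_p\le\|\Psi\|_\infty^{1-q/p}\|\Psi\|_q^{q/p}$, valid on any metric graph. Hence it suffices to control $\|\Psi\|_\infty$. The cases $p=q$ (so $\alpha=0$, and the inequality is trivial) and $q=\infty$ (which forces $p=\infty$) are immediate, so I assume $2\le q<p\le\infty$ and set $s:=(q+2)/2$, noting $2(s-1)=q$ and $s\ge1$.

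\textbf{External edge case \eqref{gn1}.} Fix a point $x\in\GG$. Since $\GG$ is connected and $|\Ed^{ext}|\ge1$, there is a simple path $P$ joining $x$ to the infinite end of some external edge; each component of an $H^1$ function on a half-line vanishes at infinity, so $\Psi\to0$ along the far end of $P$, while $\Psi\in C(\GG)$ makes $|\Psi|^s$ absolutely continuous along $P$ across the vertices it meets. The fundamental theorem of calculus then gives
\begin{equation*}
|\Psi(x)|^s=-\int_P\frac{d}{d\sigma}|\Psi|^s\,d\sigma\le s\int_P|\Psi|^{s-1}|\Psi'|\,d\sigma\le s\,\|\Psi'\|\,\|\Psi\|_q^{s-1},
\end{equation*}
where the last step is Cauchy--Schwarz together with $2(s-1)=q$ and the fact that a simple path meets each edge at most once. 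Taking the supremum over $x$ yields $\|\Psi\|_\infty\le s^{2/(q+2)}\|\Psi'\|^{2/(q+2)}\|\Psi\|_q^{q/(q+2)}$. Inserting this into the interpolation bound and collecting exponents, a short computation shows that the power of $\|\Psi'\|$ is exactly $\alpha=\frac{2}{2+q}(1-q/p)$ and that of $\|\Psi\|_q$ is $1-\alpha$, which proves \eqref{gn1} with $K_{p,q}(\GG)=s^{\alpha}$.

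\textbf{Compact case \eqref{gn2}.} When $|\Ed^{ext}|=0$ there is no end at which to force $\Psi$ to vanish, so the gradient alone cannot control $\|\Psi\|_\infty$ (constants are a counterexample), and the $L^2$--mass must enter. Let $L:=\sum_{e}\ell_e<\infty$ be the total length. Since the mean of $|\Psi|^2$ equals $\|\Psi\|^2/L$, there is a point $x_0$ with $|\Psi(x_0)|^2\le\|\Psi\|^2/L$, and on the finite-measure graph H\"older's inequality gives $\|\Psi\|\le L^{(q-2)/(2q)}\|\Psi\|_q$, whence $|\Psi(x_0)|^s\le L^{-s/q}\|\Psi\|_q^{s}$. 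Transporting from $x_0$ to an arbitrary $x$ along a simple path as above produces $|\Psi(x)|^s\le|\Psi(x_0)|^s+s\|\Psi'\|\|\Psi\|_q^{s-1}$. Both terms are homogeneous of degree $s$ in $\Psi$; bounding $\|\Psi'\|\le\|\Psi\|_{H^1}$ and, for the first term, using the crude embedding $\|\Psi\|_q\le L^{1/q}\|\Psi\|_\infty\le c_1(\GG)\|\Psi\|_{H^1}$ (itself obtained from the same path argument with $|\Psi(x_0)|\le L^{-1/2}\|\Psi\|$ and $\int_P|\Psi'|\le L^{1/2}\|\Psi'\|$), one gets $\|\Psi\|_\infty^s\le c_2(\GG)\|\Psi\|_{H^1}\|\Psi\|_q^{s-1}$, i.e. $\|\Psi\|_\infty\le c_3(\GG)\|\Psi\|_{H^1}^{2/(q+2)}\|\Psi\|_q^{q/(q+2)}$. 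Interpolating exactly as before yields \eqref{gn2} with the same exponents $\alpha,1-\alpha$ and a constant $\KK_{p,q}(\GG)$ depending only on $\GG$.

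\textbf{Main obstacle.} The external case is essentially the one-dimensional Gagliardo--Nirenberg argument run along a path, and is routine once one checks that continuity at the vertices legitimizes the fundamental theorem of calculus. The delicate point is the compact case: there the natural $L^\infty$ bound mixes the two scales, and the work lies in choosing the reference point $x_0$ and homogenizing the two resulting terms (via the finite-measure H\"older inequality and a preliminary crude embedding) so that the final power of $\|\Psi\|_{H^1}$ comes out to be precisely $\alpha$ rather than the naive $1-q/p$.
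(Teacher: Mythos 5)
Your proof is correct, but note that the paper itself does not prove Proposition \ref{p:GN}: it states it and refers to \cite{adami-serra-tilli-jfa16} for \eqref{gn1} and to \cite{mugnolo14} for \eqref{gn2}, so any self-contained argument is necessarily a different route. I checked your two key steps and they hold: the elementary interpolation $\|\Psi\|_p\le\|\Psi\|_\infty^{1-q/p}\|\Psi\|_q^{q/p}$ is valid on any metric graph, and the $L^\infty$ bounds are legitimate — in the external case the simple path to infinity traverses each edge at most once, $|\Psi|^s$ is absolutely continuous along it (each $\psi_e$ is $H^1$, hence AC, and $t\mapsto t^s$ is Lipschitz on bounded sets, with global continuity gluing the pieces), $H^1$ decay at infinity anchors the fundamental theorem of calculus, and the exponent bookkeeping indeed gives $\alpha=\frac{2}{2+q}(1-q/p)$ on $\|\Psi'\|$ and $\frac{q}{q+2}(1+2/p)=1-\alpha$ on $\|\Psi\|_q$; in the compact case the mean-value point $x_0$, the finite-measure H\"older bound, and the crude embedding $\|\Psi\|_q\le c_1(\GG)\|\Psi\|_{H^1}$ correctly homogenize both terms to degree $s$ with the right split $\|\Psi\|_{H^1}\|\Psi\|_q^{s-1}$, which is exactly where a naive use of $\|\Psi\|_\infty\le C\|\Psi\|_{H^1}$ would produce the wrong exponent $1-q/p$ instead of $\alpha$. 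By comparison, the proof of \eqref{gn1} in \cite{adami-serra-tilli-jfa16} goes through monotone (decreasing) rearrangement of $|\Psi|$ onto the half-line — which preserves all $L^p$ norms and does not increase $\|\Psi'\|$ — and then invokes the one-dimensional Gagliardo--Nirenberg inequality; that route has the advantage of relating $K_{p,q}(\GG)$ to the sharp half-line/real-line constants (a fact the paper actually uses later, via $K_{6,2}(\GG)\geq K_{6,2}(\RE)$), whereas your direct path argument is more elementary, covers both \eqref{gn1} and \eqref{gn2} by the same mechanism, and yields the explicit constant $K_{p,q}(\GG)=s^{\alpha}$ independent of the geometry of $\GG$, which suffices for the existence statement as formulated.
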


A proof of inequality \eqref{gn1} is in  \cite{adami-serra-tilli-jfa16} (see also \cite{adami-cacciapuoti-finco-noja-aihp14, adami-cacciapuoti-finco-noja-jde14, haeseler-rxv11,tentarelli-jmaa16}). In the case of compact graphs ($|\Ed^{ext}|= 0$) inequality  \eqref{gn1} cannot hold true (it is clearly violated by the constant function). Nevertheless, it can be replaced by the weaker inequality \eqref{gn2}, for  a proof we refer to   \cite{mugnolo14}. 

We recall some facts about the quadratic form $E^{lin}[\Psi]$ defined in Eq. \eqref{Elin}. We always consider $E^{lin}$ on the domain $D(E^{lin}) = H^1(\GG)$. We note that on such  domain,  $\Psi(v)$ (the value of the wave  function in a vertex of the graph) is well defined due to the global  continuity condition, see Eq.  \eqref{H1}.

Moreover  by using Gagliardo-Nirenberg inequalities it is easy to prove (see \cite{cacciapuoti-finco-noja-rxv16}) that for $W\in L^1(\GG) +L^{\infty}(\GG)$ 
\begin{equation*}%\label{klmn}
\Big|  (\Psi,W\Psi) +    \sum_{v\in \Ver} \alpha(v) |\Psi(v)|^2\Big| \leq a \|\Psi'\|^2 + b \|\Psi\|^2, \qquad \text{with }0<a<1, \, b>0,
\end{equation*}
which, by KLMN theorem, implies that the form $E^{lin}$ is lower bounded ($E_0<+\infty$) and  closed,  hence defines a selfadjoint operator. This a standard result for Schr\"odinger operators on the real-line,  see, e.g., \cite[Ch. 11.3]{lieb-loss01}.
 
It is easy to prove that the self-adjoint operator corresponding to the quadratic form $E^{lin}$  coincides with the Hamiltonian $H:D(H)\subset L^2(\GG) \to L^2(\GG)$ defined on the domain
\begin{equation}\label{domain}
D (H) :=  \left\{\Psi \in H^2(\GG) \,\Big|\;   \sum_{ e\prec v} \partial_o \psi_e (v)= \alpha_v \Psi (v)\quad \forall v \in \Ver
\right\},
\end{equation}
where $ e\prec v$ denotes the set of edges having at least one endpoint identified with the vertex $v$, and  we have denoted by $\partial_o$ the outward derivative from the vertex (more precisely $\partial_o \psi_e (v) = \psi_e' (0)$ when  $\psi_e(v)$ is identified by $\psi_e(0)$; $\partial_o \psi_e (v) = -\psi_e' (\ell_e)$ when  $\psi_e(v)$ is identified by $\psi_e(\ell_e)$). The action of $H$ is defined by  
\begin{equation}\label{action}
 (H\Psi)_e = - \psi_e '' + W_e \psi_e \qquad \forall e\in \Ed. 
\end{equation}

We conclude this section by recalling the  concentration-compactness lemma  (see Lem. \ref{l:cc} below) that will be needed in the proof of Th. \ref{t:main}. For the proof we refer to \cite{cacciapuoti-finco-noja-rxv16} (see also \cite{adami-cacciapuoti-finco-noja-aihp14}). 

For any $y\in\GG$ and $t>0$, we denote by $B(y,t)\subset \GG$ the open ball of radius $t$ and center $y$
\[
B(y,t):=\{x\in\GG\,|\; d(x,y)<t\},
\]
here 
$d(x,y)$ denotes the distance between two points of the graph, defined as the infimum of the length of the paths connecting $x$ to $y$. 

For any function $\Psi\in L^2(\GG)$ and $t> 0$ we define the concentration function
$\rho(\Psi,t)$ as
\[
\rho(\Psi,t) : = \sup_{y\in\GG} \|\Psi\|_{L^2(B(y,t))}^2;
\]
and, for any sequence $\{\Psi_n\}_{n\in\NA}$, $\Psi_n \in L^2(\GG)$, the concentrated mass
parameter $\tau$ as  
\begin{equation*}
\tau := \lim_{t\to+\infty} \liminf_{n\to\infty} \rho(\Psi_n,t).
\end{equation*}
The parameter  $\tau$ plays a key role in the
concentration-compactness lemma because it distinguishes the
occurrence of vanishing, dichotomy or compactness in $H^1(\GG)$-bounded
sequences in a sense precisely defined below.
\begin{lemma}[Concentration-compactness]
\label{l:cc}
Let $m>0$ and $\{\Psi_n\}_{n\in\NA}$ be such
that: $\Psi_n\in H^1(\GG)$,
\begin{equation*}%\label{supfun}
\|\Psi_n\|^2 \to m \quad \text{as} \quad n\to \infty\,,
\end{equation*}
\begin{equation*}%\label{supder}
\sup_{n\in\NA}\|\Psi_n'\|<\infty\,.
\end{equation*}
Then there exists a subsequence $\{\Psi_{n_k}\}_{k\in \NA}$ such that:
\begin{enumerate}[i)]
\item\label{i:cc1}
(Compactness) If $\tau=m$, at least one of
    the two following cases occurs:
\begin{itemize}
\item[$i_1)$](Convergence) There exists a
function
$\Psi\in H^1(\GG)$  such that $\Psi_{n_k}\to \Psi$  in $L^p$ as
$k\to\infty$ for all $2\leq p\leq \infty$ .
\item[$i_2)$]\label{i:runaway}(Runaway) There exists $e^*\in \Ed^{ext}$, such that for all $t>0$,  and
$2\leq p\leq\infty$
\begin{equation}
\label{e:run-1}
\lim_{k\to \infty}  
\left(\sum_{e\neq e^*}\|(\Psi_{n_k})_e\|_{L^p(I_e)}^p   + \|(\Psi_{n_k})_{e^*}\|_{L^p((0,t))}^p \right) =0 .
\end{equation}
\end{itemize}
\item\label{i:cc2} (Vanishing) If $\tau=0$, then  $\Psi_{n_k}\to 0 $ in $L^p$ as
$k\to\infty$  for all $2< p\leq \infty$. 
\item\label{i:cc3} (Dichotomy) If $0<\tau<m$, then there exist two sequences 
$\{\VV_k\}_{k\in\NA}$ and $\{\WW_k\}_{k\in\NA}$ in $H^1(\GG)$
such that 
\begin{equation}
 \supp \VV_k \cap \supp \WW_k = \emptyset 
\label{dic1} 
\end{equation}
\begin{equation}
|\VV_k(\x)| + |\WW_k(\x)| \leq |\Psi_{n_k}(\x)| \qquad \forall \x\in\GG 
\label{dic2} 
\end{equation}
\begin{equation}
\| \VV_k \|_{H^1(\GG)} + \|\WW_k\|_{H^1(\GG)} \leq c
\|\Psi_{n_k}\|_{H^1(\GG)}
\label{dic3} 
\end{equation}
\begin{equation}
\lim_{k \to \infty} \|\VV_k \|^2 = \tau \qquad \qquad \lim_{k \to \infty} \|
\WW_k \|^2= m -\tau
\label{dic4} 
\end{equation}
\begin{equation}
\liminf_{k\to \infty} \left( \|\Psi_{n_k}'\|^2 - \| \VV_k' \|^2 - \|
\WW_k' \|^2 \right) \geq 0
\label{dic5}
\end{equation}
\begin{equation}
\lim_{k\to \infty} \left( \|\Psi_{n_k} \|_{p}^p - \| \VV_k \|_{p}^p - \|
\WW_k \|_{p}^p \right) =0 \qquad 2 \leq p < \infty
\label{dic6}
\end{equation}
\begin{equation}
\label{dic7}
\lim_{k\to\infty}\left\||\Psi_{n_k}|^2-  |\VV_{k}|^2 -| \WW_{k}|^2\right\|_\infty=0.
\end{equation}
\end{enumerate}
\end{lemma}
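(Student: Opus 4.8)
The plan is to follow P.-L.\ Lions' concentration-compactness method, adapting it to a metric graph in which the finitely many external edges replace the translations available on $\RE^d$. The starting object is the L\'evy-type concentration function $\rho(\Psi_n,t)$. Since $t\mapsto\rho(\Psi_n,t)$ is nondecreasing and bounded by $\|\Psi_n\|^2$, the function $\underline\rho(t):=\liminf_n\rho(\Psi_n,t)$ is nondecreasing in $t$ and bounded by $m$, so its limit $\tau=\lim_{t\to\infty}\underline\rho(t)\in[0,m]$ exists. A first diagonal extraction over a countable dense set of radii $t$ yields a subsequence (still written $\Psi_{n_k}$) along which $\rho(\Psi_{n_k},t)$ converges for each such $t$. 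I then treat the three cases $\tau=0$, $\tau=m$ and $0<\tau<m$ separately. Two tools are used throughout: the compact embedding $H^1(K)\hookrightarrow\hookrightarrow L^p(K)$ for every \emph{bounded} subgraph $K$ (which reduces to the one-dimensional Rellich theorem on each of the finitely many intervals composing $K$), and the local Gagliardo-Nirenberg inequality obtained by restricting \eqref{gn1}--\eqref{gn2} of Prop.\ \ref{p:GN} to a ball $B(y,t)$.

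\textbf{Vanishing and compactness.} When $\tau=0$, monotonicity forces $\underline\rho(t)=0$ for every $t$, so after extraction $\rho(\Psi_{n_k},1)\to0$. The key estimate is the local bound $\|\Psi\|_{L^\infty(B(y,1))}^2\le C\big(\rho(\Psi,1)^{1/2}\|\Psi'\|+\rho(\Psi,1)\big)$, a consequence of one-dimensional Gagliardo-Nirenberg on each ball together with $\|\Psi\|_{L^2(B(y,1))}^2\le\rho(\Psi,1)$. Taking the supremum over $y$ and using $\sup_k\|\Psi_{n_k}'\|<\infty$ gives $\|\Psi_{n_k}\|_\infty\to0$, whence $\|\Psi_{n_k}\|_p^p\le\|\Psi_{n_k}\|_\infty^{p-2}\|\Psi_{n_k}\|^2\to0$ for every $2<p\le\infty$, which is the vanishing alternative. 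When $\tau=m$, I extract via Rellich and a diagonal argument over an exhaustion by bounded subgraphs a weak $H^1$ limit $\Psi$ with $\Psi_{n_k}\to\Psi$ in $L^p_{loc}$, and set $m_0=\|\Psi\|^2\le m$. If $0<m_0<m$ one would have mass $\approx m_0$ trapped near the core and mass $\approx m-m_0$ far away; as these cannot be enclosed in a single ball of fixed radius one gets $\tau\le\max(m_0,m-m_0)<m$, a contradiction, so either $m_0=m$ or $m_0=0$. If $m_0=m$ no mass escapes, local convergence upgrades to global convergence in $L^2$, and interpolation with the uniform $L^\infty$ bound gives convergence in every $L^p$, i.e.\ the convergence alternative $i_1$. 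If $m_0=0$ the whole mass escapes while staying concentrated (because $\tau=m$); since the compact core is bounded and $|\Ed^{ext}|<\infty$, the pigeonhole principle allows a further extraction sending the concentration centers $y_{n_k}$ to infinity along one fixed $e^*\in\Ed^{ext}$, and the fact that the mass outside a ball of fixed radius $t_\ve$ about $y_{n_k}$ tends to $0$ forces the $L^p$ mass on every bounded portion to vanish, yielding the runaway alternative \eqref{e:run-1}.

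\textbf{Dichotomy (main obstacle).} The case $0<\tau<m$ is the heart of the argument. Given $\ve>0$, monotonicity provides a fixed radius $R$ with $\underline\rho(R)>\tau-\ve$ and a larger radius $R'>R$ (which will grow to infinity in the diagonalization) with $\underline\rho(R')<\tau+\ve$; after extraction there are centers $y_k$ with $\|\Psi_{n_k}\|_{L^2(B(y_k,R))}^2\ge\tau-\ve$ and $\|\Psi_{n_k}\|_{L^2(B(y_k,R'))}^2\le\tau+\ve$, so the mass in the annulus $B(y_k,R')\setminus B(y_k,R)$ is at most $2\ve+o(1)$. I then choose cutoffs $\phi_k,\chi_k\in[0,1]$ with disjoint supports, $\phi_k\equiv1$ on $B(y_k,R)$ and $\chi_k\equiv1$ outside $B(y_k,R')$, and set $\VV_k=\phi_k\Psi_{n_k}$, $\WW_k=\chi_k\Psi_{n_k}$. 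Then \eqref{dic1} and \eqref{dic2} are immediate (disjoint supports and $\phi_k+\chi_k\le1$), \eqref{dic3} follows from the boundedness of $\|\Psi_{n_k}\|_{H^1(\GG)}$ and of the cutoff derivatives, and \eqref{dic4} from the annulus-mass estimate. The decisive point for \eqref{dic5}--\eqref{dic7} is to place the transition across an annulus on which $\Psi_{n_k}$ carries little gradient: among $\sim\ve^{-1}$ disjoint unit annuli between radii $R$ and $R'$ there must be one with $\int|\Psi_{n_k}'|^2\le C\ve$, and letting the cutoffs vary only there makes the cross terms $\phi_k'\Psi_{n_k}$, $\chi_k'\Psi_{n_k}$ and the mismatch in $|\Psi_{n_k}|^2$ all of order $\ve$. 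A final diagonal argument over $\ve_j\to0$, re-extracting the subsequence, converts these approximate identities into the exact limits \eqref{dic4}--\eqref{dic7}. The main difficulty throughout is precisely this simultaneous control, on a single well-chosen annulus, of mass, kinetic energy and $L^p$/$L^\infty$ norms; the lack of translation invariance is absorbed by allowing the centers $y_k$ to be arbitrary points of $\GG$ and by using that only finitely many external edges are available as escape routes.
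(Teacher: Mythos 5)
The paper does not actually prove Lemma \ref{l:cc}: it defers entirely to \cite{cacciapuoti-finco-noja-rxv16} (see also \cite{adami-cacciapuoti-finco-noja-aihp14}), where the proof is the Cazenave--Lions concentration-compactness scheme adapted to a metric graph, with the finitely many external edges playing the role that translations play on $\RE^d$ and producing the runaway alternative. Your proposal follows exactly that scheme --- monotone concentration function and diagonal extraction, local Gagliardo--Nirenberg for vanishing, weak-limit mass trichotomy ($m_0\in\{0,m\}$) giving convergence versus runaway, and small-mass transition annuli with cutoffs for dichotomy --- and its steps are sound (the nested extractions are legitimate here because the upper-bound conditions $\rho_n(R')<\tau+\ve$ are monotone in $R'$ and $\ve$), so it is essentially the same argument as the one the paper cites.
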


\section{\label{s:3}Proof of Theorem \ref{t:lowerbound}}
We always consider the functional  $E$ on the domain $D(E) = H^1(\GG)$ and note that 
\begin{equation*}%\label{EElin}
E[\Psi]=E^{lin}[\Psi ] -\frac{1}{\mu+1} \|\Psi \|_{2\mu+2}^{2\mu + 2} .
\end{equation*}
\begin{proof}[Proof of Th. \ref{t:lowerbound}]
We note the trivial lower bound 
\begin{equation}\label{trivialbound}
E[\Psi] \geq 
 \| \Psi ' \|^2 - (\Psi,W_-\Psi) -   \sum_{v\in \Ver_-} |\alpha_v| |\Psi(v)|^2 -\frac{1}{\mu+1} \|\Psi \|_{2\mu+2}^{2\mu + 2}.  
 \end{equation} 
Write $W_- = W_{-,1}+W_{-,\infty}$.  By Gagliardo-Nirenberg inequality and setting $\|\Psi\|^2 =m$, one has the following lower bound for $E[\Psi]$,
\begin{equation}\label{lower}
 E[\Psi] 
 \geq  \| \Psi' \|^2  - a_\mu m^{\frac{2+\mu}{2}}\|\Psi'\|^{\mu}
  - b \sqrt m  \|\Psi'\| - c m ,
\end{equation}
with 
\begin{equation*}%\label{a2}
a_\mu =  \frac{K^{2\mu+2}_{2\mu+2,2}}{\mu+1}  ; \quad b = K_{\infty,2} \left( \sum_{v\in \Ver_-}|\alpha_v| +\|W_{-,1}\|_1 \right)   ;\quad c =  \|W_{-,\infty}\|_{\infty} .
\end{equation*}
Next we distinguish two cases. For any $0<\mu<2$ and $m>0$, we note that there exists $\beta \equiv \beta(a_\mu,b,c,m)>0$,    such that $x^2 - a_\mu m^{\frac{2+\mu}{2}} x^\mu  -b  \sqrt m  x - c m   \geq - \beta $. Hence, 
 \[ 
 E[\Psi] \geq  - \beta   \qquad 0<\mu<2,\; m>0 ,
\] 
so that it must be  $\nu_\mu(m) \leq \beta $.\\
For $\mu = 2$ and $0<m<m_2^*= 1/\sqrt{a_2}$,  we note that $(1-a_2m^2)x^2 -b  \sqrt m  x - c m   \geq - b^2m/(4(1-a_2m^2)) - cm$,  hence 
 \[ 
 E[\Psi] \geq    - b^2m/(4(1-a_2m^2)) - cm   \qquad \mu = 2,\; 0<m<m_2^*,
\]
so that $\nu_\mu(m) \leq b^2m/(4(1-a_2m^2)) + cm $ under the same conditions on $\mu$ and $m$.
\end{proof}

%%%%%%%%We have the following result
%%%%%%%%\begin{proposition} Let Ass. \ref{a:G}, \ref{a:W}, and \ref{a:E0} hold true. The $-E_0$ is a simple eigenvalue of $H$ and the corresponding eigenfuction $\Psi_0$ can be chosen to be positive. \nota{Check whether the eigenfuction is positive or strictly positive!! Check ``can be chosen to be positive''!!}
%%%%%%%%\end{proposition}
%%%%%%%%\nota{Add proof!!}

\section{\label{s:4}Proof of Theorem \ref{t:main}}
In this section we prove the existence of the ground state for small enough mass. Since the problem in the subcritical case $0<\mu<2$ was discussed in \cite{cacciapuoti-finco-noja-rxv16}, in the proof we often  skip the details whenever the argument used in \cite{cacciapuoti-finco-noja-rxv16} remains unchanged. 
\begin{proof}[Proof of Th. \ref{t:main}]
We start by recalling that  the bound 
\begin{equation}\label{nu-cond1}
-\nu_\mu(m)\leq -mE_0
\end{equation} 
follows directly from the inequality  $E[\Psi] < E^{lin}[\Psi]$, for all $\Psi\in H^1(\GG)$. Hence, by Ass. \ref{a:E0}, it must be $\nu_\mu(m)>0$. 

\medskip

In the  remaining part of the proof we shall show  that, both in the subcritical and in the critical case,   for $m<m_\mu^*$ minimizing sequences have a  convergent subsequence (case $i_2$ of Lemma \ref{l:cc}). 

We recall that, by Th. \ref{t:lowerbound}, we already know that $\nu_\mu(m)<+\infty$ for any $m>0$ if $0<\mu<2$, or for $m<m_2^*$  if $\mu=2$, hence the existence of a lower bound for the infimum in Eq. \eqref{minimization} is granted for $m<m_\mu^*$ for any $0<\mu\leq2$.

Let $\{\Psi_n\}_{n\in\NA}$ be a minimizing sequence, i.e., $\Psi_n \in H^1(\GG)$, $\|\Psi_n\|^2=m$, and $\lim_{n\to\infty} E[\Psi_n] = -\nu_\mu(m)$. We remark that, when choosing a minimizing sequence, it is enough to assume $\|\Psi_n\|^2\equiv m_n\to m$ as $n\to \infty$, since in such a case one can define $\widetilde \Psi_n = \sqrt{m} \Psi_n /\|\Psi_n\|$ and use the fact that $\lim_{n\to\infty} E[\widetilde\Psi_n] = \lim_{n\to\infty} E[\Psi_n] $. 

We shall prove that, for any $0<\mu\leq2$ and   $m$ small enough,  there exists $\hat \Psi \in H^1 (\GG)$ such that $\|\hat \Psi\|^2 = m$, $E[\hat \Psi] =-\nu_\mu(m)$ and $\Psi_n \to \hat \Psi$ in $ H^1 (\GG)$.

We claim that, up to taking a subsequence that we still denote by $\Psi_n$, the following bound holds true
\begin{equation}\label{world}
\sup_{n\in\NA}\|\Psi_n'\|<\infty,
\end{equation}
for any $m>0$ if $0<\mu<2$, and for $0<m<m_2^*$ if $\mu=2$. 

To prove the bound \eqref{world} we start by noticing that, up to taking a subsequence, we can  assume that 
\begin{equation}\label{flesh}0<m_n<\sqrt{1+\eta}\,m\quad\text{and}\quad E[\Psi_n] \leq -\nu_\mu(m)/2,\end{equation}
for all $\eta>0$. 
Next we consider first the critical case $\mu=2$. Fix $0<m<m_2^*= 1/\sqrt{a_2}$. Then there exists $0<\eta<1/2$ such that $m<\sqrt{\frac{1-\eta}{(1+\eta)a_2}}$. In inequality \eqref{lower} we set $\Psi \equiv \Psi_n$ and note that, by  \eqref{flesh}, 
 \[ 
 E[\Psi_n] \geq\eta \|\Psi_n'\|^2 +(1-\eta -(1+\eta)a_2m^2  )    \|\Psi_n'\|^2 - b (1+\eta)^{1/4} \sqrt m  \|\Psi_n'\| - c \sqrt{1+\eta}\,m.
\]
By the trivial bound $Ax^2-Bx-C\geq -\frac{B^2}{4A} - C$, for all $A,B,C>0$, we infer 
\begin{equation*}%\label{dear}
 E[\Psi_n] \geq\eta \|\Psi_n'\|^2 - \frac{ b^2 \sqrt{1+\eta} \, m}{4(1-\eta -(1+\eta)a_2m^2)} - c \sqrt{1+\eta}\,m,
\end{equation*}
for any $m<\sqrt{\frac{1-\eta}{(1+\eta)a_2}}$. The latter bound, together with the fact that $E[\Psi_n]<0$ by \eqref{flesh}, implies the claim \eqref{world}. \\
For $0<\mu<2$ we proceed in a similar way.  By inequality \eqref{lower}, we infer that for all $m>0$ there exists $\tilde\beta \equiv \tilde\beta(a_\mu,b,c,m)>0$,    such that $(1-\eta)x^2 - a_\mu m_n^{\frac{2+\mu}{2}} x^\mu  -b  \sqrt m_n  x - c m_n   \geq - \tilde\beta $. Hence, 
 \[ 
 E[\Psi_n] \geq \eta \|\Psi_n'\|^2   - \tilde \beta   \qquad 0<\mu<2,\; m>0 ,\]
 from which the claim \eqref{world}  follows. 
 
The bound \eqref{world}, together with Gagliardo-Nirenberg inequality and the first bound in \eqref{flesh}, implies 
\[
\sup_{n\in\NA}\|\Psi_n\|_p <\infty \qquad \forall p\in[2,+\infty].\]
Moreover  the following lower bound holds true
\begin{equation}
\label{e:floor}
 \frac{1}{\mu+1} \| \Psi_n\|_{2\mu+2}^{2\mu+2} +(\Psi_n,W_-\Psi_n) +  \sum_{v \in \Ver_- } |\alpha_v| |\Psi_{n}(v)|^2 \geq
\frac{\nu_\mu(m)}2 \,.
\end{equation}
The latter  is an immediate consequence of the bounds \eqref{trivialbound} and \eqref{flesh}.

Next we use Lem. \ref{l:cc} and  prove that  vanishing and dichotomy
cannot occur for $\{\Psi_n\}_{n\in\NA}$. Set $\tau = \lim_{t\to\infty}\liminf_{n\to\infty}
\rho(\Psi_n,t)$. 
 
If
$\tau=0$,  then by Lem. \ref{l:cc} the l.h.s. in Eq. \eqref{e:floor} would converge to zero bringing to a contradiction (see \cite{cacciapuoti-finco-noja-rxv16} for the details), hence $\tau>0$.

Suppose $0<\tau<m$, then there
would exist $\VV_k$ and $\WW_k$ satisfying \eqref{dic1}-\eqref{dic7}. It is possible to prove (see \cite{cacciapuoti-finco-noja-rxv16} for the details) that in this  case it must be 
\[
\liminf_{k\to\infty} \left(
E[\Psi_{n_k} ] - E[ \VV_k] - E[\WW_k]
\right) \geq 0 \,,
\]
which implies
\begin{equation}
\label{e:black-1}
\limsup_{k\to\infty} \left(
 E[ \VV_k] + E[\WW_k]
\right) \leq -\nu_\mu(m) \,.
\end{equation}
We use the  identity
\[
E[\Psi] = \frac{1}{\delta^2} E[\delta \Psi] + \frac{\delta^{2\mu} -1}{\mu+1} \| \Psi
\|_{2\mu+2}^{2\mu+2},
\]
which holds true for any  $\Psi\in H^1(\GG)$ and $\delta >0$. Let $\delta_k= \sqrt {m} / \|\VV_k\|$ and $\gamma_k = \sqrt {m} / \|\WW_k\|$  so that $\|\delta_k \VV_k\|^2,\,
\|\gamma_k \WW_k\|^2 =m$. Then, using the above identity and the fact that
$E[\delta_k \VV_k], E[\gamma_k \WW_k] \geq -\nu_\mu(m)$,  one has
\[
E[\VV_k] \geq - \frac{\nu_\mu(m)}{\delta^2_k} + \frac{\delta^{2\mu}_k -1}{\mu+1} \| \VV_k
\|_{2\mu+2}^{2\mu+2}
\]
\[
E[\WW_k] \geq - \frac{\nu_\mu(m)}{\gamma^2_k} + \frac{\gamma^{2\mu}_k -1}{\mu+1} \| \WW_k
\|_{2\mu+2}^{2\mu+2}
\]
from which 
\[
E[\VV_k]+E[\WW_k] \geq -\nu_\mu(m) \left( \frac{1}{\delta^2_k} + \frac{1}{\gamma^2_k} \right) +
\frac{\delta^{2\mu}_k -1}{\mu+1} \| \VV_k \|_{2\mu+2}^{2\mu+2} +
\frac{\gamma^{2\mu}_k -1}{\mu+1} \| \WW_k \|_{2\mu+2}^{2\mu+2}\,.
\]
Notice that,   by \eqref{dic4},   $\delta^2_k \to m/\tau $ and  $\gamma^2_k \to 1/(1-\tau/m)$, hence  $1/ \delta^2_k+1/\gamma^2_k \to 1$. Moreover set  $\theta = \min \{ (\tau/m)^{-\mu} , (1-\tau/m)^{-\mu} \}>1$. Then  
\begin{align}
\label{e:black-2}
\liminf_{k\to\infty} \left(
 E[ \VV_k] + E[\WW_k]
\right) 
&\geq -\nu_\mu(m) + \frac{\theta -1}{\mu+1} \liminf_{k\to\infty} \| \Psi_{n_k}
\|_{2\mu+2}^{2\mu+2} > -\nu_\mu(m),
\end{align}
where we used the fact that $\liminf_{k\to\infty} \| \Psi_{n_k}
\|_{2\mu+2}^{2\mu+2} \neq 0$. The latter claim is proved by noticing
that $\liminf_{k\to\infty} \| \Psi_{n_k} 
\|_{2\mu+2}^{2\mu+2} = 0$ would bring to a contradiction with inequality \eqref{e:floor}. This can be understood by using the inequalities 
\[\|\Psi\|_\infty \leq K_{\infty,2\mu+2} \|\Psi'\|^{\alpha}\|\Psi\|_{2\mu+2}^{1-\alpha}\]
with $\alpha = 1/(\mu+2)$; and 
\begin{equation}\label{W-1}
(\Psi,W_- \Psi)  \leq \|W_-\|_r \|\Psi\|^2_{2r/(r-1)} \leq K^2_{\frac{2r}{r-1},2\mu+2} \|W_-\|_r  \|\Psi'\|^{2\alpha}\|\Psi\|_{2\mu+2}^{2(1-\alpha)}
\end{equation}
which holds true for all $r\in[1,1+1/\mu]$ and with $ \alpha= (1-(\mu+1)(r-1)/r)/(\mu+2)$. 

Since inequalities  \eqref{e:black-1} and \eqref{e:black-2} cannot be satisfied at the same time we must also exclude the case $0<\tau<m$. Hence it must be $\tau=m$.
%
%We remark that  up to this point the arguments used in \cite{cacciapuoti-finco-noja-rxv16}, only for the case $0<\mu<2$,  hold true also for $\mu=2$. We note additionally that so far the only needed assumption on the potential is $W\in L^1+L^\infty$. 
% 

Next we prove that  for $m$ small enough  the minimizing sequence is not {\em runaway}. 
By absurd suppose that $\{\Psi_n\}_{n\in\NA} $ is {\em runaway}, then we have that 
\begin{equation}\label{limit}
\lim_{n\to\infty} \Psi_{n} (v) =0\quad \forall v\in \Ver\qquad \text{and}\qquad \lim_{n\to\infty}(\Psi_n, W_- \Psi_n)= 0.
\end{equation} The first limit  is a direct consequence of Lem. \ref{l:cc}, Eq. \eqref{e:run-1}. The proof of the second one requires a bit more work.
%, and we remark that this is the only step in which we use the stronger assumption $W_- \in L^r(\GG)$ for some $r\in[1,+\infty)$. 
Assume that $\Psi_n$ escapes at infinity on the external  edge $e^*$ (this can always be done up to taking a subsequence). We note that
\begin{equation*}
\lim_{n\to\infty}\int_{I_e} (W_-)_e |(\Psi_n)_e|^2 dx = 0 \qquad \forall e \neq e^* ,
\end{equation*}
this is a  direct consequence of the inequality 
\[ 
\int_{I_e} (W_-)_e |(\Psi_n)_e|^2 dx  \leq \|W_{-,\infty}\|_{\infty} \|(\Psi_n)_e\|^2_{L^2(I_e)} + \|W_{-,1}\|_{1} \|(\Psi_n)_e\|^2_{L^\infty(I_e)} \]
and  Lemma \ref{l:cc}, Eq. \eqref{e:run-1}. We are left to prove that 
\begin{equation}\label{holiday}
\lim_{n\to \infty}\int_{0}^{+\infty} (W_-)_{e^*} |(\Psi_n)_{e^*}|^2 dx = 0 . 
\end{equation}
We have that  for any $\ve>0$ and $r\geq 1$, there exists $R>0$ (independent of $n$) such that
 \begin{equation*}%\label{badW-infty}
\int_{R}^{+\infty} (W_-)_{e^*} |(\Psi_n)_{e^*}|^2 dx \leq \|(W_-)_{e^*}\|_{L^r(R,+\infty)} \|\Psi_n\|_{2r/(r-1)}^2 \leq \ve.
\end{equation*}
 For such $R$, there exists $n_0$ such that for all $n>n_0$ one has 
\begin{equation*}
\int_{0}^{R} (W_-)_{e^*} |(\Psi_n)_{e^*}|^2 dx \leq  \|W_-\|_r \|(\Psi_n)_{e^*}\|_{L^{2r/(r-1)}(0,R)}^2 \leq \ve  
\end{equation*}
by Eq.  \eqref{e:run-1}, from which the second limit in \eqref{limit}. 

Recalling that, by Lem. \ref{l:cc} - Eq. \eqref{e:run-1}, one has $\lim_{n\to\infty}\|(\Psi_n)_e\|_{L^{2\mu+2}(I_e)} =0$ for all $e\neq e^*$, and by using Eq. \eqref{limit}, we infer 
 \begin{equation*}% \label{little}
 \lim_{n\to \infty} E[\Psi_n]  \geq \lim_{n\to\infty } \left( \|(\Psi_n)_{e^*}'\|^2_{L^2(\RE_+)} -\frac{1}{\mu+1} \|(\Psi_n)_{e^*}\|^{2\mu+2}_{L^{2\mu+2}(\RE_+)}\right)
 %\lim_{n\to\infty }  \int_0^\infty |(\Psi_n)_{e^*}'|^2 dx -\frac{1}{\mu+1} \int_0^\infty  |(\Psi_n)_{e^*}|^{2\mu+2} dx.
 \end{equation*}
 Let  $\chi:\RE_+ \to [0,1]$ be a  function such that $\chi \in C^\infty(\RE_+)$, $\chi(0) = 0$ and $\chi(x)=1$ for all $x\geq 1$ and define 
\[\psi_n^*(x) :=  \chi(x)(\Psi_n)_{e^*}(x) , \]
so that $\psi_n^*(0)= 0$, and ${\|\psi_n^*}'\|_{L^2(\RE_+)}^2 \leq c$. We have the following inequalities (we refer to  \cite{cacciapuoti-finco-noja-rxv16} for the details)  
\begin{align}
-\nu_\mu(m) = &\lim_{n\to \infty} E[\Psi_n]  \geq 
\lim_{n\to\infty }\left( \|{\psi_n^*}'\|^2_{L^2(\RE_+)} -\frac{1}{\mu+1} \|\psi_n^*\|^{2\mu+2}_{L^{2\mu+2}}\right) \nonumber\\
 \geq &\inf \left\{E_\RE[\psi]\,\Big|\; \psi\in H^1(\RE), \, \| \psi\|_{L^2(\RE)}^2 =m  \right\} = - t_\mu(m), \label{infsol}
\end{align}
see Eqs. \eqref{thres-inf} and \eqref{nuRE} for the explicit value of $t_\mu(m)$. 

 By the bounds \eqref{nu-cond1} and  \eqref{infsol} we infer that, if the minimizing sequence $\Psi_n$ is runaway, then it must be 
\begin{equation}\label{chain}0>-E_0 m \geq -\nu_\mu(m) \geq -t_\mu(m).\end{equation}
 Next we distinguish the subcritical and the critical case. If $0<\mu<2$, it is well known (see, e.g, \cite{cazenave03}) that the infimum $ -t_\mu(m)$ is negative, finite, and that it is indeed attained for any $m>0$, moreover $t_\mu(m)$  is given by Eq. \eqref{thres-inf}, 
with
\begin{equation}\label{gammamu}
\gamma_\mu= \frac{2-\mu}{2+\mu}\left( 2\frac{(\mu+1)^{\frac 1 \mu }  }{\mu}  \int_0^1 (1-t^2)^{\frac 1 \mu -1} dt\right)^{-\frac{2\mu}{2-\mu}}\qquad 0<\mu<2.
 \end{equation} 
We conclude that, whenever $-E_0 m  < -\gamma_\mu m^{1+\frac{2\mu}{2-\mu}}$ (i.e., $m <\left(E_0/ \gamma_\mu\right)^{\frac1\mu-\frac12}$) 
there is a contradiction with the chain of inequalities \eqref{chain}, hence the minimizing sequence cannot be runaway and must converge to a certain function $\hat\Psi$. \\
If $\mu = 2$ the infimum $-t_2(m)$ exhibits a critical mass, see Eq. \eqref{nuRE}. Moreover the infimum is attained only at the critical mass  $m = \pi \sqrt 3/2 $. Since $ -t_\mu(m) =0$,  for  $m \leq \pi \sqrt 3/2$, contradicts the chain of inequalities  \eqref{chain} we conclude that also in this case the minimizing sequence cannot be runaway and must converge to a certain function $\hat\Psi$. The latter argument, together with the constraint $m<m_2^*$ (needed for the bound \eqref{world}) and the fact that by \cite[Prop. 2.3]{adami-serra-tilli-cmp16}, $(2/\pi)^{1/3}= K_{6,2}(\RE) \leq K_{6,2}(\GG)$, tell us that the value of the threshold mass is $m_2^*$. 

By Lem. \ref{l:cc} we conclude that for all $0<\mu\leq 2$ and $m$ small enough there exists a state $\hat\Psi \in H^1(\GG)$ such that  minimizing sequences converge, up to taking  subsequences, to $\hat\Psi $ in $L^p$ for $p  \geq 2$. To establish the convergence of $\Psi_n \to \hat \Psi$ in $H^1(\GG)$, and conclude the proof of the theorem,  one can repeat  the general argument used  in \cite{cacciapuoti-finco-noja-rxv16}.
\end{proof}

\appendix
\section{\label{s:appendix}An upper bound for $-\nu_\mu(m)$}
In the following proposition we prove that in the subcritical case, if the potential $W$ decays at infinity,  the infimum in \eqref{minimization} cannot exceed the infimum $-t_\mu(m)$.
\begin{proposition}\label{p:necessary}Let $0<\mu<2$. Let Assumption \ref{a:G}  hold true and assume that $W\in L^r(\GG)$ for some $r\in[1,+\infty)$. Then 
\begin{equation*}%\label{appendix}
-\nu_\mu(m) \leq -t_\mu(m).
\end{equation*}
\end{proposition}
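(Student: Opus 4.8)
The plan is to exhibit an explicit family of test functions on the graph $\GG$ whose energy approaches $-t_\mu(m)$, thereby establishing the upper bound $-\nu_\mu(m)\le -t_\mu(m)$. Since $\GG$ has at least one external edge $e^*$ by Assumption \ref{a:G}, I would transplant a near-optimizer of the free real-line problem onto that half-line, pushed far out to infinity. Concretely, fix $\ve>0$ and let $\psi_\ve\in H^1(\RE)$ be a function with $\|\psi_\ve\|_{L^2(\RE)}^2=m$ and $E_\RE[\psi_\ve]\le -t_\mu(m)+\ve$; in the subcritical case such a near-minimizer exists (indeed the infimum is attained by a soliton). I would then cut off $\psi_\ve$ to a compactly supported function and translate it by a large parameter $R$ so that its support lies in $(R,+\infty)\subset I_{e^*}$, defining $\Psi_R$ to be this translated bump on $e^*$ and identically zero on every other edge. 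Because the support sits away from the single vertex attached to $e^*$, the continuity condition in \eqref{H1} is satisfied trivially (the function vanishes at and near the vertex), so $\Psi_R\in H^1(\GG)$.

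With this construction the mass is $\|\Psi_R\|^2=m$ (after a fixed rescaling to absorb the $L^2$ loss from the cutoff, letting the cutoff scale go to zero first), the vertex terms $\sum_v\alpha_v|\Psi_R(v)|^2$ vanish identically since $\Psi_R$ is zero at all vertices, and the kinetic and nonlinear terms reproduce the real-line values $\|\psi_\ve'\|_{L^2(\RE)}^2$ and $\|\psi_\ve\|_{L^{2\mu+2}(\RE)}^{2\mu+2}$ up to errors controlled by the cutoff. The only genuinely graph-dependent term is the potential contribution $(\Psi_R,W\Psi_R)=\int_R^{+\infty}W_{e^*}|\psi_\ve(\cdot-R)|^2\,dx$. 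First I would bound this by Hölder, writing
\begin{equation*}
\bigl|(\Psi_R,W\Psi_R)\bigr|\le \|W_{e^*}\|_{L^r(R,+\infty)}\,\|\Psi_R\|_{2r/(r-1)}^2,
\end{equation*}
and here the decay hypothesis $W\in L^r(\GG)$ enters decisively: since $W\in L^r$, the tail norm $\|W_{e^*}\|_{L^r(R,+\infty)}\to 0$ as $R\to\infty$, while $\|\Psi_R\|_{2r/(r-1)}$ stays bounded uniformly in $R$ (it equals the fixed $L^{2r/(r-1)}(\RE)$ norm of the bump, finite because $\psi_\ve\in H^1(\RE)\hookrightarrow L^q(\RE)$ for all $q\ge 2$). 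Therefore the potential term tends to $0$ as $R\to\infty$.

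Assembling the pieces, I would take $R\to\infty$ for fixed $\ve$ and cutoff to obtain
\begin{equation*}
-\nu_\mu(m)\le E[\Psi_R]=\|\psi_\ve'\|_{L^2(\RE)}^2-\tfrac{1}{\mu+1}\|\psi_\ve\|_{L^{2\mu+2}(\RE)}^{2\mu+2}+o(1)=E_\RE[\psi_\ve]+o(1)\le -t_\mu(m)+\ve+o(1),
\end{equation*}
and then send $\ve\to0$ to conclude $-\nu_\mu(m)\le -t_\mu(m)$. The order of limits matters: for each $\ve$ I first let the cutoff parameter shrink and $R$ grow so that the mass and nonquadratic errors and the potential term are each below $\ve$, which is why one works with a diagonal or sequential argument rather than a single simultaneous limit. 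The main obstacle I anticipate is bookkeeping the cutoff and renormalization so that the mass constraint $\|\Psi_R\|^2=m$ holds exactly while the energy errors stay controlled; this is routine but must be done carefully, and it is also where the restriction to the subcritical case $0<\mu<2$ is used, since there the real-line infimum $-t_\mu(m)$ is finite and attained, guaranteeing a good near-optimizer $\psi_\ve$ to transplant.
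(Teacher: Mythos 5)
Your proposal is correct and follows essentially the same route as the paper: the paper also transplants the real-line minimizer (the explicit soliton $\phi_\omega$, multiplied by a smooth cutoff $\chi$ vanishing at the vertex) onto an external edge $e^*$, translates it to infinity, and kills the potential term using the decay of the $L^r$ tail norm of $W$. The only cosmetic differences are that the paper uses the exact optimizer rather than an $\ve$-near-minimizer and allows $\|\Phi_n\|^2\to m$ rather than enforcing the constraint exactly, which it has already justified by rescaling.
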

\begin{proof}To prove the claim it is enough to exhibit a sequence $\Phi_n$ such that: $\Phi_n\in H^1(\GG)$, $\|\Phi_n\|^2\to m$, and 
\[E[\Phi_n] \to  -t_\mu(m) \] 
as $n\to \infty$. 

We recall that (see, e.g., \cite{sulem-sulem07} and \cite{cazenave03}) for $0<\mu<2$ and $m>0$, the infimum  $-t_\mu(m)$ is attained by the function 
\[
\phi_{\omega}(x) = [(\mu+1)\omega]^{\frac1{2\mu}} \sech^{\frac1\mu} (\mu \sqrt\omega x) \]
with
\[
\omega = \left(\frac{m \mu}{2 (\mu + 1)^\frac{1}{\mu}  I(\mu)}  \right)^{\frac{2\mu}{2-\mu}} , 
\]
where $I(\mu) = \int_0^1(1-t^2)^{\frac1\mu -1} dt $. Fix $e^*\in \Ed^{ex}$ and let $\Phi_n $ be defined as  
\[(\Phi_n)_e = \left\{ \begin{aligned}
&\chi \phi_\omega(\cdot-n) &e=e^*\\
& 0 \qquad & e\neq e^*  
\end{aligned}\right.\] where $\chi$ is a $C^{\infty}(\RE_+)$ function such that $\chi \in C^\infty(\RE_+)$, $\chi(0) = 0$ and $\chi(x)=1$ for all $x\geq 1$. One has 
\[\|\Phi_n\|^2 = \|\chi \phi_\omega(\cdot - n)\|^2_{L^2(\RE_+)}\to m.\] 
Moreover 
\[\begin{aligned}
E[\Phi_n] 
= &  \|(\chi \phi_\omega(\cdot - n))'\|^2_{L^2(\RE_+)}  \\ 
&   + (\chi \phi_\omega(\cdot - n), W_e^* \chi \phi_\omega(\cdot - n))_{L^2(\RE_+)} - \frac{\|\chi \phi_\omega(\cdot - n))\|^{2\mu+2}_{L^{2\mu+2}(\RE_+)}}{\mu+1} \\ 
\to&  E_\RE[\phi_\omega] = -t_\mu(m).
\end{aligned}
\]
The latter claim is an immediate consequence of the the fact that 
\[\|\phi'_\omega(\cdot - n)\|^2_{L^2(-\infty,1)} \to 0 \qquad \text{and} \qquad \|\phi_\omega(\cdot - n)\|^p_{L^p(-\infty,R)} \to 0\]  
for all $p\in[1,\infty]$ and $R>0$, as $n\to\infty$ (see also  Eq. \eqref{holiday}). 
\end{proof}

%    Bibliographies can be prepared with BibTeX using amsplain,
%    amsalpha, or (for "historical" overviews) natbib style.
%\bibliographystyle{/Users/claudio/Dropbox/WIP-dropbox/bibliography/mybibstyle}
%\bibliography{NLSgraphs}

\end{document}